\newtheorem{theorem}{Theorem}[section]
\newtheorem{prop}{Proposition}[section]
\theoremstyle{definition}
\theoremstyle{definition}
\theoremstyle{definition}
\newtheorem{mydef}{Definition}[section]
\theoremstyle{definition}
\newcommand{\de}{\text{d}}
\newcommand{\tr}{\text{tr}}
\title{\textbf{
\vskip-3cm
Cobordism, Singularities and the Ricci Flow Conjecture}}
\date{}
\author[a]{David Martín Velázquez}
\author[a]{Davide De Biasio}
\author[a,b]{Dieter L\"ust}
\affil[a]{Max--Planck--Institut f\"ur Physik, Werner--Heisenberg--Institut, \newline
F\"ohringer Ring 6, 80805 M\"unchen, Germany}
\affil[b]{Arnold Sommerfeld Center for Theoretical Physics, \newline
Ludwig Maximilians Universit\"at M\"unchen, \newline  Theresienstrasse 37, 80333 M\"unchen, Germany}
\begin{document}
\fancypagestyle{plain}{%
	\fancyhead[R]{LMU-ASC 27/22 \\
MPP-2022-123}
	\renewcommand{\headrulewidth}{0pt}
}
\begin{spacing}{1.8}
   \maketitle 
\end{spacing}
\begin{abstract}
    In the following work, an attempt to conciliate the Ricci flow conjecture and the Cobordism conjecture, stated as refinements of the Swampland distance conjecture and of the No global symmetries conjecture respectively, is presented. This is done by starting from a suitable manifold with trivial cobordism class, applying surgery techniques to Ricci flow singularities and trivialising the cobordism class of one of the resulting connected components via the introduction of appropriate defects. The specific example of $\Omega^{SO}_4$ is studied in detail. A connection between the process of blowing up a point of a manifold and that of taking the connected sum of such with $\mathbb{CP}^n$ is explored. Hence, the problem of studying the Ricci flow of a $K3$ whose cobordism class is trivialised by the addition of $16$ copies of $\mathbb{CP}^2$ is tackled by applying both the techniques developed in the previous sections and the classification of singularities in terms of ADE groups.

\end{abstract}
\newpage
\tableofcontents
\newpage
  \section{Introduction}
The Swampland Program \cite{vafa2005string,Palti:2019pca,vanBeest:2021lhn,Brennan:2017rbf} originated as an endeavor to explore the idea that standard effective field theory techniques might break down when trying to incorporate gravitational physics into a low energy quantum field theory. This is expected to hold even at scales which lie way below Planck's mass 
\begin{equation}
    M_P\equiv\sqrt{\frac{\hbar c}{G}}\ ,
\end{equation}
where quantum gravity degrees of freedom are supposed to become relevant. A conspicuous body of evidence for the above-mentioned deviation from the typical effective field theory intuition, which suggests that quantum gravity phenomenology could safely be ignored in the low energy regime, has emerged from inspecting spatio-temporal and thermodynamical properties of black holes, general quantum gravitational arguments and explicit superstring theory constructions. Indeed, the belief that quantum gravity effects can strongly and further constraint the set of apparently consistent quantum field theories coupled to a dynamical space-time metric is now widely accepted and gets often expressed in the form of the so-called Swampland conjectures. In order for the discussion to acquire clarity, it is appropriate to introduce the notion of the Landscape, being the subset of low energy quantum field theories coupled to gravity that satisfy the restrictions imposed by Swampland conjectures and hence admit an ultraviolet completion to quantum gravity. By doing so, one straightforwardly defines the Swampland as the subset of apparently consistent low energy quantum field theories coupled to gravity that do not belong to the Landscape. The aim of the Swampland program is thus to find precise and unambiguous criteria that allow to classify theories into those belonging to the Landscape and those that fall into the Swampland. The natural prosecution of this analysis is to model such a set of theories, partitioned into the Landscape and the Swampland, as some sort of geometrically characterised moduli space, locally charted by the vacuum expectation values of some fields and provided with some notion of distance. The \textit{Swampland distance conjecture} was outlined in \cite{Ooguri:2006in} as the claim that large displacements in the moduli space should force an infinite tower of light states to descend into the spectrum of the low energy effective theory, lowering the energy cut-off $\Lambda_{\text{eff}}$ below which it functionally models relevant phenomena. This idea was pushed into conjecturing that infinite distance limits in the moduli space should involve significant alterations of the features of a low energy theory, such as the decompactification of some internal dimensions, the emergence of asymptotically tensionless and weakly coupled string degrees of freedom \cite{lee2022emergent} or the restoration of a global symmetry. The latter, in particular, connects to the long-standing idea that gravity should not allow for the existence of global symmetries \cite{banks1988constraints,banks2011symmetries,harlow2018constraints,harlow2021symmetries}, which is strongly backed up by semi-classical black hole physics, holographic arguments and by the fact that a global worldsheet symmetry in superstring theory always gets gauged in the target space-time. Both the \textit{Swampland distance conjecture} and the \textit{No global symmetries conjuecture} have been discussed and refined in many different instances. 
The \textit{Swampland distance conjecture} was first extendend to the case of AdS cosmological constant in \cite{lust2019ads}, showing how the infinite distance limit in which $\Lambda\to 0$ is accompanied by an infinite tower of light states. 
Extending the previously-introduced notion of a moduli space to regard the space-time metric components themselves as moduli, the most immediate way of displacing them is by considering paths induced by geometric flow equations \cite{Polyakov:1975rr,friedan1980nonlinear,friedan1985nonlinear}. Particular attention was dedicated to Ricci flow \cite{hamilton1982}
\begin{equation}
    \frac{\de g_{\mu\nu}}{\de\lambda}=-2R_{\mu\nu}\ ,
\end{equation}
which precisely reduces to the behaviour discussed in \cite{lust2019ads} in the case of AdS, and generalisations thereof. This way, exploiting mathematical methods which are rooted in string theory via the $\sigma$-model graviton renormalisation group flow, the \textit{Swampland distance conjecture} for the metric was rephrased and specified as the \textit{Ricci flow conjecture} \cite{Kehagias:2019akr,Bykov:2020llx,Luben:2020wix,DeBiasio:2020xkv}. The reader is strongly suggested to refer to \cite{Perelman:2006un,chow2004ricci,Cao2006ACP,morgan2007ricci,kleiner2008notes,Topping2006LecturesOT} for comprehensive discussions of the techniques that are usually employed when studying Ricci flow of differentiable manifolds. On the other hand, the recent \textit{Cobordism conjecture} \cite{mcnamara2019cobordism,ooguri2020cobordism,montero2021cobordism,dierigl2021swampland,buratti2021dynamical,andriot2022looking,angius2022end,blumenhagen2022dimensional,blumenhagen2022dynamical}, stating that viable quantum gravity backgrounds should have trivial cobordism class and thus be cobordant to one another, can arguably be regarded as a refinement of the \textit{No global symmetries} conjecture. It must be noted that the \textit{Ricci flow conjecture} typically deals with locally defined geometric flow equations and properties, while the \textit{Cobordism conjecture} comes as a genuine global statement. In the following work, an attempt to conciliate the two perspectives is presented. 
After having introduced some appropriate mathematical techniques, it is argued that the key mechanism via which Ricci flow can affect the global and topological properties of a manifold is for it to encounter a singularity, on which surgery procedures can be applied. These are expected to preserve cobordism classes, as they are typically enforced by substituting shrinking throats with smooth caps and connected sums are cobordant to disjoint unions. From a physics point of view, it is nevertheless usually pertinent to neglect one of the two connected components produced by a surgery. It is thus shown that defects must be introduced in order to trivialise the cobordism class of the resulting connected manifold and the specific example of $\Omega^{SO}_4$ is studied in detail. A connection between the process of blowing up a point of a manifold $\mathcal{M}$ and that of taking the connected sum of such manifold with $\mathbb{CP}^n$ is thereafter explored. Hence, the problem of studying the Ricci flow of a $K3$ whose cobordism class is trivialised by the addition of $16$ copies of $\mathbb{CP}^2$ is tackled both from a geometric flow perspective and by exploiting the classification of singularities in terms of ADE groups.

\section{Brief introduction to bordisms}
 At its core, the bordism relation is a simple one. We will begin by stating its simplest definition. Then, we will progressively add more structure, defining some of the concepts when necessary. Some other introductions are given in \cite{mcnamara2019cobordism, andriot2022looking}, as well as \cite[Chapters 4, 17]{milnor}. For a more technical and in-depth treatment of bordisms the reader is redirected to the classical reference, \cite{DanFreed}. 
 
 \subsection{Unoriented bordism}
 We now introduce the most basic bordism relation possible, where we do not take into account any kind of structure or special property of the bordant manifolds.
 \begin{mydef} Let $M$ and $N$ be two closed (i.e. compact and without boundary) $n$-dimensional manifolds. An (unoriented) cobordism between them is an $(n+1)$-dimensional manifold whose boundary is the disjoint union of $M$ and $N$, which we write as $\partial W=M\sqcup N$.
\end{mydef}

A more precise statement of the above discussion would involve the introduction of collar neighbourhoods on the boundary of the $(n+1)$-dimensional manifold, as well as a pair of maps defined on them such that, restricted over $\partial W$, they define embeddings of $M$ and $N$ into $W$. However, such precision would only be needed in formal proofs involving explicit gluings of the boundaries.

Because $M\sqcup N=N\sqcup M$, it is straightforward to check that cobordisms define an equivalence relation between manifolds. We denote this relation by $M\sim N$ without further reference to the type of cobordism which defines it, and we denote the equivalence class of a manifold $M$ by $[M]$. One may further endow the set of equivalence classes of closed $n$-manifolds with the structure of an abelian group by defining the sum of two classes as
\begin{equation}
[M]+[N]=[M\sqcup N].
\end{equation}
We denote this group by $\Omega_n$. Here, the zero-element with respect to the sum is taken to be the empty manifold $\emptyset$. In other words, we say that, given an $n$-dimensional maniofold, the fact that $[M]=0$ means that there exists some $(n+1)$-dimensional manifold $W$ such that $\partial W=M$.

Furthermore, we may consider a "multiplication", provided by the Cartesian product of manifolds $[M][N]\coloneqq[M\times N]$. This allows us to build up $\Omega_{*}\coloneqq\bigoplus_{n=0}^{\infty}\Omega_{n}$, and endow it with the structure of a graded ring. We call it the unoriented cobordism ring.

\begin{figure}[h!]
\centering
\def\svgwidth{0.35\columnwidth}
\begingroup%
  \makeatletter%
  \providecommand\color[2][]{%
    \errmessage{(Inkscape) Color is used for the text in Inkscape, but the package 'color.sty' is not loaded}%
    \renewcommand\color[2][]{}%
  }%
  \providecommand\transparent[1]{%
    \errmessage{(Inkscape) Transparency is used (non-zero) for the text in Inkscape, but the package 'transparent.sty' is not loaded}%
    \renewcommand\transparent[1]{}%
  }%
  \providecommand\rotatebox[2]{#2}%
  \newcommand*\fsize{\dimexpr\f@size pt\relax}%
  \newcommand*\lineheight[1]{\fontsize{\fsize}{#1\fsize}\selectfont}%
  \ifx\svgwidth\undefined%
    \setlength{\unitlength}{373.87222891bp}%
    \ifx\svgscale\undefined%
      \relax%
    \else%
      \setlength{\unitlength}{\unitlength * \real{\svgscale}}%
    \fi%
  \else%
    \setlength{\unitlength}{\svgwidth}%
  \fi%
  \global\let\svgwidth\undefined%
  \global\let\svgscale\undefined%
  \makeatother%
  \begin{picture}(1,0.82009391)%
    \lineheight{1}%
    \setlength\tabcolsep{0pt}%
    \put(0,0){\includegraphics[width=\unitlength,page=1]{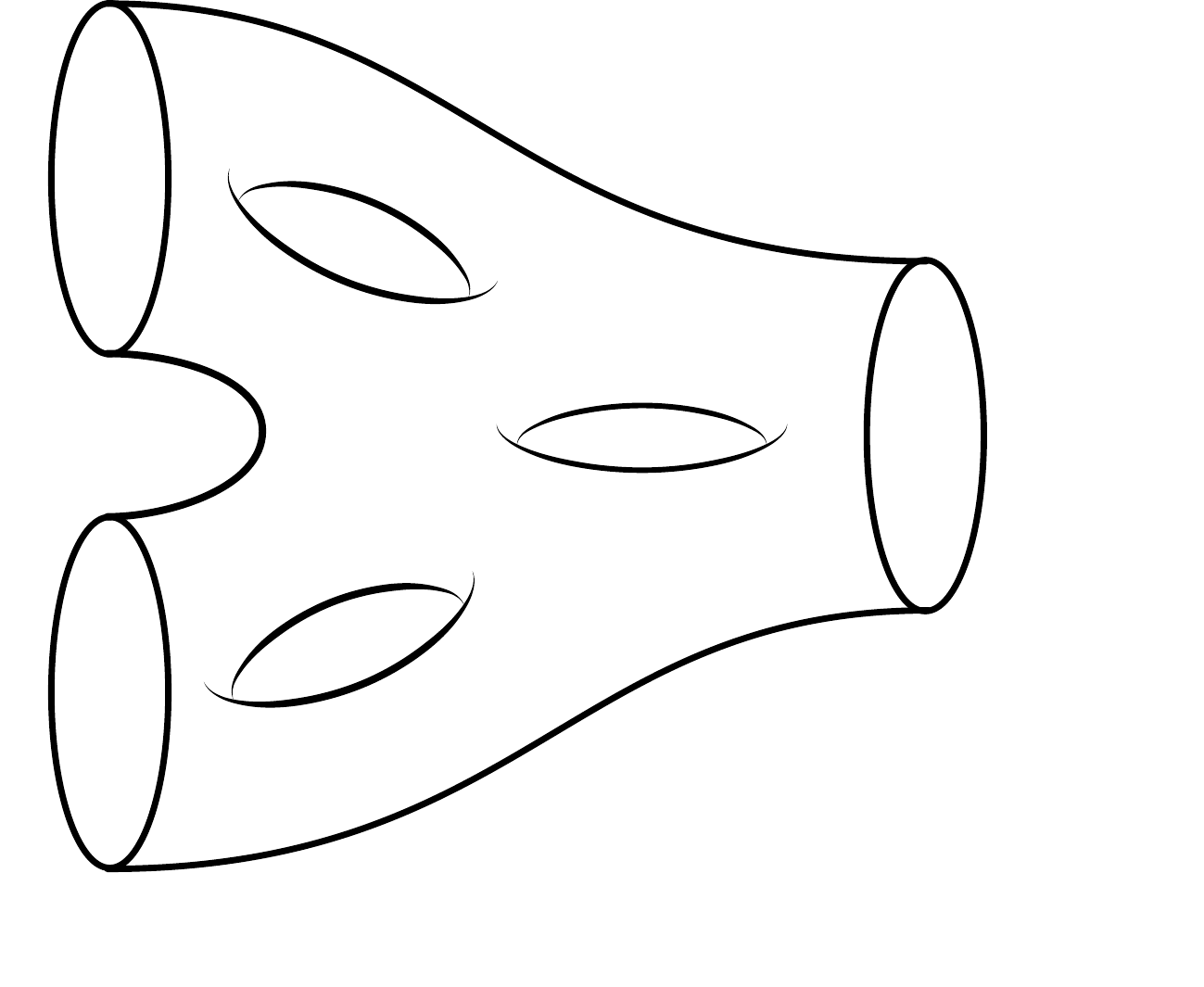}}%
    \put(0.49763304,0.15943824){\color[rgb]{0,0,0}\makebox(0,0)[lt]{\lineheight{1.25}\smash{\begin{tabular}[t]{l}$W$\end{tabular}}}}%
    \put(-0.00244225,0.00612002){\color[rgb]{0,0,0}\makebox(0,0)[lt]{\lineheight{1.25}\smash{\begin{tabular}[t]{l}$\mathbb{S}^1\sqcup\mathbb{S}^1$\end{tabular}}}}%
    \put(0.74122275,0.24254539){\color[rgb]{0,0,0}\makebox(0,0)[lt]{\lineheight{1.25}\smash{\begin{tabular}[t]{l}$\mathbb{S}^1$\end{tabular}}}}%
  \end{picture}%
\endgroup%

\label{fig:unorcob}
\caption{Simple example of an unoriented cobordism relation between two copies of the circle ($\mathbb{S}^1\sqcup\mathbb{S}^1$) and a single $\mathbb{S}^1$ by a surface of genus 3 with boundary.}
\end{figure}

A property that any unoriented cobordism ring is that all of its elements have characteristic 2, and thus it is an algebra over $\mathbb{Z}_{2}$. In other words, the grading of the ring is given by $\mathbb{Z}_2$, and all of its (nontrivial) cobordism classes are isomorphic to a $\mathbb{Z}_2$ factor.  This is easy to see by noting that, if we neglect orientations, $[M]+[M]=[M\sqcup M]$, and it is simple to construct a manifold whose boundary is $M\sqcup M$: it is $M\times [0,1]$. Hence, in the language of the group operation, we have that $2[M]=0$ for any manifold $M$.

\subsection{Adding more structure to the cobordism group}
While the definition we have made in the previous section is enough to understand the basic idea underlying the bordism relation, it falls flat when dealing with manifolds which allow for more physically interesting scenarios. For example, just to consider manifolds which have a definite orientation (and are, hence, orientable), we need to modify our original definition. The precise and rigorous way in which this is done is covered in \cite{DanFreed}, where tangential structures are introduced and treated in detail. A more condensed but readable account of this is contained in the introduction of \cite{andriot2022looking}.

For our present purposes, it is enough to know that, however the additional structure is defined on two $n$-dimensional bordant manifolds, it must extend in a coherent way to the $(n+1)$-dimensional manifold which connects them. This can be done by means of classifying spaces and bordisms with maps. A classifying space for a principal $G$-bundle is defined as follows. Given any Lie group $G$, we can define a space $BG$, together with a fibration given by its universal cover $EG\to BG$, with the defining property that any $G$-bundle over $X$ can be obtained as a pullback from $EG$. That is, given any $G$-bundle $P\to X$, there exists a map $f:X\to BG$ such that $P\simeq f^*EG$. This construction classifies $G$-bundles in the following sense. In general, given some $G$-bundle $E\to Y$, and two maps $f,g:X\to Y$, then the two pullback budles $f^*E$ and $g^*E$ over X are isomorphic if and only if $f$ and $g$ are homotopic (they belong to the same homotopy class). Therefore, at the level of the classifying space of a group, we have that all principal $G$-bundles over some manifold $X$ are classified by the homotopy classes of maps $X\to BG$. These classes of maps are often denoted by $[X,BG]$. 

There is an explicit construction of a classifying space for the structure groups of frame bundles, given by infinite Grassmannian manifolds. A Grassmannian manifold $G_n(\mathbb{R}^{n+k})$ is a manifold whose points are $n$-dimensional planes in $\mathbb{R}^{n+k}$.   Heuristically, one defines $BO(n)$ as the direct limit on $k$ of $G_n(\mathbb{R}^{n+k})$. In other words, one considers $k$ planes on higher and higher dimensional Euclidean spaces. WE highlight this because $BO(n)$ is the relevant classifying space for the case at hand, as all the structure groups we will consider are at most reductions of orthogonal ($O(n)$) bundles, which can be pulled back from the canonical bundle over an infinite Grassmannian.

While this notion is interesting in principle, we would like to have a practical way in which we could actually classify $G$-bundles over some space. Namely, we are looking for properties which allow us to tell bundles in different classes apart. This is where characteristic classes come into play. Because of their natural behavior under pullblacks, and the fact that they are invariant under isomorphisms, they are precisely the sort of tools to differentiate vector (and, in particular, principal) bundles over manifolds. For example, in the case of bordisms between orientable manifolds, the relevant characteristic classes are the Pontrjagin and Stiefel-Whitney classes, which give rise, respectively, to the Pontrjagin and Stiefel-Whitney numbers.

With classifying spaces, one can define certain structures on manifolds. For example, given some $n$-dimensional manifold $X$, a map $f:X\to BSO(n)$ induces an $SO$-bundle over $X$, i.e. an orientation. To make this fit with the bordism relation, we produce the following definition
\begin{mydef}
Let $X$ be a topological space. We denote by $(M,f)$ a pair consisting of a manifold $M$ and a continuous map $f:M\to X$. Now, we say that two pairs $(M,f)$, $(N,g)$ are bordant if 
\begin{itemize}
    \item There is a cobordism $W$ between $M$ and $N$ in the usual sense, and 
    \item There is a continuous map $F:W\to X$ such that 
    \begin{equation}
        F\mid_{M}=f, \hspace{1cm} F\mid_{N}=g,
    \end{equation}
    where $M$ and $N$ denote the inclusions of these manifolds into $\partial W$.
\end{itemize}
\end{mydef}
Let us consider an $SO$-structure as an example again. In the above definition, this implies taking\footnote{Note that we are now taking $BSO$ instead of $BSO(n)$. This is because we are juggling between two different dimensionalities. The classifying space $BSO$ is defined as a direct limit on $n$ of $BSO(n)$ spaces.} $X=BSO$. Then, the $SO$-structure on $W$, which is induced by $F:W\to BSO$ is such that it induces the appropriate orientation on the bordant manifolds, $M$ and $N$.

Once knowing how extra structure can be treated under the bordism relation in general, we now come back to the particular case of orientable manifolds, namely $SO$-structures. The most important result regarding this is that there are certain topological numbers which fully characterize the oriented bordism classes.
\begin{theorem}\label{thm:pnum}
Two manifolds are (oriented or unoriented) cobordant if and only if their Stiefel-Whitney and Pontryagin numbers are the same. In particular, a manifold $M$ is cobordant to $\emptyset$ if and only if all its Stiefel-Whitney and Pontryagin numbers vanish.
\end{theorem}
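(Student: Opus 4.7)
The statement splits into two directions of very different difficulty.

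\emph{Necessity.} Suppose $M=\partial W$ with $W$ a compact $(n+1)$-manifold. Along the inclusion $i\colon M\hookrightarrow W$ the outward normal furnishes a canonical splitting $TW|_{M}\cong TM\oplus \underline{\mathbb{R}}$, so naturality of Stiefel--Whitney and Pontryagin classes, together with their stability under trivial summands, gives $i^{*}w_{I}(TW)=w_{I}(TM)$ and analogously for $p_{I}$. For any multi-index $I$ of total degree $n$, I would then evaluate
\[
\langle w_{I}(TM),[M]\rangle = \langle i^{*}w_{I}(TW),[M]\rangle = \langle w_{I}(TW),\, i_{*}[M]\rangle = 0,
\]
because the fundamental class of the boundary of $W$ vanishes in $H_{n}(W;\mathbb{Z}/2)$ (and in $H_{n}(W;\mathbb{Q})$ when $W$ is oriented, for the Pontryagin numbers). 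The general ``cobordant manifolds have equal numbers'' statement then follows by bilinearity with respect to disjoint union, using that $[M]-[N]=0$ in $\Omega_{n}$ is equivalent to $M\sqcup N$ bounding.

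\emph{Sufficiency.} This is the deep half. I would route it through the Pontryagin--Thom construction identifying
\[
\Omega_{n}^{O}\cong \pi_{n}(MO),\qquad \Omega_{n}^{SO}\cong \pi_{n}(MSO),
\]
where $MG$ is the Thom spectrum of the tautological bundle over $BG$. Under this identification a Stiefel--Whitney (respectively Pontryagin) number of $M$ becomes the pairing of a characteristic class in $H^{*}(MG)$ with the Hurewicz image of the class $[M]$. For the unoriented case I would invoke Thom's theorem that $H^{*}(MO;\mathbb{Z}/2)$ is a free module over the mod-$2$ Steenrod algebra, whence $MO$ splits as a wedge of suspensions of $H\mathbb{Z}/2$; the mod-$2$ Hurewicz map is then injective and Stiefel--Whitney numbers constitute a complete set of invariants. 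For the oriented case I would combine this with Wall's structure theorem that the torsion in $\Omega_{*}^{SO}$ consists entirely of elements of order two, detected via the forgetful map $\Omega_{*}^{SO}\to \Omega_{*}^{O}$ by Stiefel--Whitney numbers, while $\Omega_{*}^{SO}\otimes \mathbb{Q}$ is detected by Pontryagin numbers through $H^{*}(MSO;\mathbb{Q})$.

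\emph{Main obstacle.} The substantive piece of work is the determination of $H^{*}(MO)$ and $H^{*}(MSO)$ as modules over the Steenrod algebra, together with Wall's analysis of the torsion in oriented bordism. These are classical but non-trivial computations which I would cite rather than reproduce: the geometric input collapses to the Pontryagin--Thom construction, but the detection statement itself is an algebraic fact about Thom spectra whose genuine proof requires substantially more machinery than the necessity half.
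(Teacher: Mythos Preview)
Your proposal is a correct outline of the classical argument and in fact goes further than the paper itself: the paper does not prove this theorem at all. Theorem~\ref{thm:pnum} is stated without proof as a standard result from the literature (the reference is Milnor--Stasheff), and the paper simply uses it as a black box in the subsequent discussion of $\Omega_4^{SO}$.

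That said, your sketch is accurate. The necessity direction is exactly as you describe: the splitting $TW|_M \cong TM \oplus \underline{\mathbb{R}}$ and the vanishing of $i_*[M]$ in $H_n(W)$ do the job. For sufficiency, you have correctly identified that the real content lies in the structure of the Thom spectra $MO$ and $MSO$ over the Steenrod algebra, together with Wall's determination of the 2-torsion in $\Omega_*^{SO}$. Your honest acknowledgement that these computations should be cited rather than reproduced is entirely appropriate --- this is a foundational theorem whose full proof occupies a substantial portion of a textbook, and no research paper invoking it would re-derive it.
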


\section{Brief introduction to the Ricci flow equations}
The Ricci flow equations, first introduced by Richard Hamilton in \cite{hamilton1982} are a set of PDEs which modify the geometry of a given (differentiable) manifold. If $(\mathcal{M}, g(\lambda))$ is a family of Riemannian manifolds, parametrized by $\lambda$, we say that they are a solution to the Ricci flow equations if they satisfy 
\begin{equation}
    \frac{\de g_{\mu\nu}}{\de \lambda}=-2R_{\mu\nu},
\end{equation}
where $R_{\mu\nu}$ denotes the components of the Ricci tensor with respect to $g$. Originally, the flow equations were defined as a modification of some gradient flow equations, so as to make the solutions well-behaved. Indeed, the Ricci flow can be regarded as a sort of "heat equation" for manifolds (though the analogy is not totally rigorous). This is so because it tends to smooth out certain types of manifolds. In particular, its first use was to show that manifolds which admitted metrics of positive scalar curvature, also admitted metrics of \textbf{constant} positive scalar curvature. In recent years, it was famously used by Perelman to prove Thurston's geometrization conjecture, which contained the Poincaré conjecture as a corollary. A great introduction to this tool, which takes these recent developments, and covers the techniques developed by Perelman is \cite{Topping2006LecturesOT}. In this small section, we will not introduce such technicalities, but rather explore two simple scenarios for the Ricci flow, and then outline its importance in the context of the swampland program.

\subsection{The Ricci flow for Einstein manifolds}
An Einstein space is a Riemannian manifold $(\mathcal{M},g)$ for which the Ricci tensor is proportional to the metric, namely $R_{\mu\nu}=\Lambda g_{\mu\nu}$, with $\Lambda\in \mathbb{R}$. Some examples of this are the well-known maximally symmetric spaces: de Sitter ($\Lambda >0$), anti de Sitter ($\Lambda<0$), and Euclidean space $(\Lambda=0)$. Clearly, the Einstein condition greatly simplifies the Ricci flow equations, which are reduced to
\begin{equation}
\partial_{s}g_{\mu\nu}(s)=-2\Lambda g_{\mu\nu}
\end{equation}
The solution to this equation is a rescaling of the metric
\begin{equation}
g(s)_{\mu\nu}=(1-2\Lambda s)g(0)_{\mu\nu}
\end{equation}
This can be parametrized as a Weyl rescaling of the metric 
\begin{equation}
g(s)_{\mu\nu}=e^{-2\omega(s)}g(0)_{\mu\nu}.
\end{equation}

Alternatively, given that the solutions to the Ricci flow are given by rescaling, one is justified in condensing the flow equations by a single ODE for the cosmological constant. One can check\footnote{Though the calculations are rather cumbersome} that under the Ricci flow, the scalar curvature evolves as 
\begin{equation}
    \partial_sR=\nabla^2R+2R_{\mu\nu}R^{\mu\nu}.
\end{equation}
For an Einstein space, this reduces to the much simpler equation
\begin{equation}
    \Lambda'(s)=2\Lambda^2(s)
\end{equation}
Whose solution is
\begin{equation}
    \Lambda(s)=\frac{\Lambda_0}{1-2\Lambda_0(s-s_0)},
\end{equation}
where $\Lambda_0=\Lambda(s_0)$. Clearly, the behavior of the solution depends on the sign. In particular, we have that for a dS space, the solution shrinks to a point (the curvature approaches $\infty$) at finite time. On the other hand, an AdS space flows to flat space in the limit $s\to\infty$. Both of the perspectives we have seen can of course be related by means of a rescaling.

\subsection{Introducing Ricci solitons}
In the context of the Ricci flow, there is a further generalization of Einstein manifolds, which serves as motivation for the definition of new flows, such as the ones used by Perelman on his landmark works. So-called Ricci solitons are manifolds that change by a rescaling and a diffeomorphism under the flow. Let $(\mathcal{M}, g(s))$ be a Ricci flow, and define a one-parameter family of diffeomorphisms $\varphi_{s}:\mathcal{M}\to\mathcal{M}$, $\forall s\in [0,T)$ such that $\varphi_{0}=\mathds{1}$, as well as a scaling factor $\sigma(s)$ such that $\sigma(0)=1$. With these definitions, the condition for a manifold to be a Ricci soliton is phrased as
\begin{equation}
g(s)=\sigma(s)\varphi^{*}_{s}(g(0)),
\end{equation}
where $\varphi^{*}_{s}$ of course denotes the pullback given by the diffeomorphism. After differentiating the above expression, evaluating it at $t=0$, and using the explicit form of the Lie derivative defined by $V=\varphi'(s)$, then  we have that the PDE defining the flow of a Ricci soliton is given by
\begin{equation}
\frac{\partial g_{\mu\nu}}{\partial s}=\sigma'(0)g_{\mu\nu}(0)+\nabla_{\mu}V_{\nu}+\nabla_{\nu}V_{\mu}.
\end{equation}
Renaming $\sigma'(0)=-2\Lambda$, we obtain that the manifolds which display this sort of evolution under the Ricci flow must be such that
\begin{equation}\label{eq:RicSol}
-2R_{\mu\nu}=-2\Lambda g_{\mu\nu}+\nabla_{\mu}V_{\nu}+\nabla_{\nu}V_{\mu}.
\end{equation}
Indeed we can see that any Einstein manifold is a Ricci soliton, with $V_\mu=0$.

Examples of such manifolds are not so easy to come by as with Einstein manifolds. One such example of great physical signifiacnce is Witten's cigar, a two-dimensional geometry given by 
\begin{equation}
    g=\frac{1}{1+x^2+y^2}\left(dx^2+dy^2\right)
\end{equation}
For this metric, the Ricci tensor turns out to be 
\begin{equation}
    R_{\mu\nu}=\frac{2}{1+x^2+y^2}g_{\mu\nu}.
\end{equation}
This could in principle look like an Einstein manifold. However, if we define the radial vector field $Y=-2(x\partial_x+y\partial_y)$, we have that 
\begin{equation}
    \mathcal{L}_{Y}g=\frac{-4}{1+x^2+y^2}g
\end{equation}
Thus, the Ricci flow equations are sourced by a pure diffeomorphism, namely, $\Lambda=0$.


\subsection{The Ricci flow and the swampland}
Recent publications, as \cite{Kehagias:2019akr,Bykov:2020llx,Luben:2020wix,DeBiasio:2020xkv}, have recast the swampland distance conjecture in terms of the Ricci flow. To this end, a distance along the moduli space of metrics was defined, and then used to measure the distance in moduli space of metrics along the Ricci flow. It was then argued that an infinite distance along the Ricci flow shuould be accompanied by an infinite tower of states, whose mass scale is determined by said distance.

Of special importance to this refinement of the distance conjecture are those backgrounds which are fixed points of the flow, namely those which satisfy
\begin{equation}
    \frac{\partial g_{\mu\nu}}{\partial s}=0.
\end{equation}
Clearly, these are given by Ricci-flat manifolds. From the discussion above, one such example of a manifold flowing to flat space along the flow is $AdS$ space flowing to Euclidean space. From the perspective of the distance along the space of metrics, this is an incarnation of the $AdS$ distance conjecture. It states that, as $AdS$ space flows to flat space (namely, $\Lambda\to 0$), the associated tower of states with mass scale $m$ behaves in Planck units as $m\sim |\Lambda|^{\alpha}$, with $\alpha$ being a positive order one number.

In \cite{Kehagias:2019akr}, the Ricci flow distance conjecture was then refined in terms of Perelman's $\mathcal{F}$ and $\mathcal{W}$ entropy functionals (e.g. \cite[Chapter 8]{Topping2006LecturesOT}). Mathematically, the beauty of these functionals is that they allow for the consistent recasting of the Ricci flow equations in terms of gradient flows. Namely, one obtains the equations by means of the variational principle on certain functionals. Physically, this is useful for adding more fields to the effective theory, such as the dilaton field. One of most striking facts about the Perelman's entropy functionals is that they very closely resemble certain effective actions which are naturally obtained from the bosonic string theory. For example, the $\mathcal{F}$-functional has the form
\begin{equation}
    \mathcal{F}[g,f]=\int\left(R+\lvert\nabla f\rvert^2\right)e^{-f}dV
\end{equation}
where $f$ is a real scalar function on the background. This is closely related to the effective action in string theory with a metric and a dilaton field in the string frame. Indeed, the above $f$ plays the role of the dilaton, and we have that $f=2\phi$, and thus $g_s^2=e^f$, with $g_s$ being the string coupling constant.

The above observation is important in the context of the infinite distance conjecture, as it allows us to relate the Ricci flow equations (obtained form $\mathcal{F}$) with the RG flow equations form the string effective action. This plays into the fact that the distance conjecture, at its core, studies the limits where certain theories are strongly and weakly coupled. In this context, we can also use the $\mathcal{F}$ and $\mathcal{W}$ functionals themselves as a measure of the distance along the flow. This is basically because the flow equations are obtained by finding the maximal rate of change of the functionals, and thus these are monotonic along the flow.

\section{The cobordism class and Ricci flow for $\Omega_{4}^{SO}$}

Our first task is to give solid arguments to justify the behavior of cobordism classes of manifolds under the Ricci flow. Intuitively, given the smooth nature of the flow equations, one expects the cobordism class of the initial manifold to remain unchanged wherever the solution is defined. In other words, we expect the flow not to change the cobordism class, at least unless we encounter a singularity. Indeed, given a family of manifolds $(\mathcal{M}, g(s))$ parametrized by the flow time $s$, let $T$ be the maximal time for which the solution is defined. Then for every $T'<T$, we expect to be able to construct a cobordism between $(\mathcal{M},g(0))$ and $(\mathcal{M}, g(T'))$ by following the flow.

In this section we will  show that this is the case for Einstein spaces and for the more general Ricci solitons. For it, we will need a quantity which classifies the cobordism class of a manifold, and that can be related to the Ricci Flow in some way. Theorem \ref{thm:pnum} gives us such a quantity: the Pontrjagin numbers of the frame bundle of the manifold. Indeed, if we were to flow some manifold, and find that at some value of the flow time one of its Pontrjagin numbers has changed, then it would follow that it belongs to a different cobordism class than the original manifold.

Because $p_{1}(\mathcal{R})\in H^{4j}(\mathcal{M},\mathbb{Z})$, the smallest dimension for which $p_{1}(\mathcal{R})$ does not trivially vanish is $d=4$, where the first Pontrjagin class is the only one that does not automatically vanish. In this case, we have that 
\begin{equation}
p_{1}(\mathcal{R})=-\frac{1}{8\pi^{2}}\tr\mathcal{R}^{2},
\end{equation}
where $\mathcal{R}$ of course stands for the curvature 2-form of the frame bundle. Consequently, the trace is defined in terms of the $GL(4,\mathbb{R})$ indices of the curvature, resulting in a top form in $d=4$. There is, however, one subtlety with the above definition, and it is that $\mathcal{R}$ is defined in terms of a coordinate frame of the bundle, while the Ricci flow is more naturally defined in coordinates. In order to make the distinction explicit, we use latin indices to denote frame indices, and greek indices otherwise. We can thus write the above definition in a more explicit fashion as 
\begin{equation}
p_{1}(\mathcal{R})=-\frac{1}{8\pi^{2}}{\mathcal{R}^{a}}_{b}{\mathcal{R}^{b}}_{a}.
\end{equation}

The curvature 2-form can be defined in terms of the Riemann tensor as follows
\begin{equation}
{\mathcal{R}^{a}}_{b}={R^{a}}_{bcd}\hat{\theta}^{c}\wedge\hat{\theta}^{d},
\end{equation} 
where $\lbrace \hat{\theta}^{a}\rbrace$ is the orthonormal coframe. We thus have that 
\begin{align*}
p_{1}(\mathcal{R})&={R^{a}}_{bcd}{R^{b}}_{aef}\hat{\theta}^{c}\wedge\hat{\theta}^{d}\wedge\hat{\theta}^{e}\wedge\hat{\theta}^{f} \\
&={R^{a}}_{bcd}{R^{b}}_{aef}\varepsilon^{cdef}d\mu,
\end{align*}
where $d\mu=\hat{\theta}^{c}\wedge\hat{\theta}^{d}\wedge\hat{\theta}^{e}\wedge\hat{\theta}^{f}=\sqrt{g}d^{4}x$. As we have already said before, the above equation for the first Pontrjagin class is expressed with respect to an orthonormal frame (i.e. a noncoordinate basis). Therefore, we must obtain its coordinate form. We can do this by means of the \textit{vielbeins} $\lbrace{e_{\mu}}^{a}\rbrace\subset GL(4,\mathbb{R})$. There are quite a few indices to transform, but we see that
\begin{equation}
{R^{a}}_{bcd}{R^{b}}_{aef}={R^{\alpha}}_{\beta\mu\nu}{R^{\beta}}_{\alpha\rho\sigma}{e_{\alpha}}^{a}{e^{\beta}}_{b}{e_{\beta}}^{b}{e^{\alpha}}_{a}{e^{\mu}}_{c}{e^{\nu}}_{d}{e^{\rho}}_{e}{e^{\sigma}}_{f}
\end{equation}
Due to the original trace on $\mathcal{R}$, the first four \textit{vielbeins} turn out to amount to two traces of $g_{\alpha\beta}$, and hence are constant. Because we will only be concerned with the variation under the Ricci Flow of $p_{1}$, we may drop constant prefactors and write
\begin{equation}
p_{1}(\mathcal{R})\sim{R^{\alpha}}_{\beta\mu\nu}{R^{\beta}}_{\alpha\rho\sigma}{e^{\mu}}_{c}{e^{\nu}}_{d}{e^{\rho}}_{e}{e^{\sigma}}_{f}\varepsilon^{cdef}\sqrt{g}d^{4}x.
\end{equation}

\subsection{Ricci flow and Einstein manifolds in $\Omega_4^{SO}$}

For an Einstein space, we know that $R_{\mu\nu}=\Lambda g_{\mu\nu}$, and therefore the Ricci Flow equations are reduced to 
\begin{equation}
\partial_{s}g_{\mu\nu}(s)=-2\Lambda g_{\mu\nu}
\end{equation}
The solution to this equation is a rescaling of the metric
\begin{equation}
g(s)_{\mu\nu}=(1-2\Lambda s)g(0)_{\mu\nu}
\end{equation}
This can be parametrized as a Weyl rescaling of the metric 
\begin{equation}
g(s)_{\mu\nu}=e^{-2\omega(s)}g(0)_{\mu\nu}.
\end{equation}
In this situation, seeing that $p_{1}(\mathcal{R})$ is invariant is fairly simple. Under a rescaling, the relevant quantities transform as
\begin{equation}
g_{\mu\nu}\mapsto\kappa g_{\mu\nu}\Rightarrow\begin{cases}
{R^{\alpha}}_{\beta\mu\nu}\mapsto{R^{\alpha}}_{\beta\mu\nu} \\
d\mu\mapsto \kappa^{d/2}d\mu \\
{e^{\mu}}_{c}\mapsto \kappa^{-1/2}{e^{\mu}}_{c}
\end{cases}
\end{equation}
The second property comes from the appeareance of $\sqrt{g}$ in the measure, and the third from the fact that $g_{\mu\nu}={e_{\mu}}^{c}e_{\nu c}$. We conclude that the first Pontrjagin number is invariant under the Ricci flow for Einstein manifolds (this result is clearly seen to extend to manifolds of dimension $d=4n$). 

\subsection{Ricci flow and Ricci solitons}
We now come to check the invariance of the characteristic numbers of an orientable 4-dimensional manifold under the Ricci flow. Instinctively, we expect that they should be, as the only thing that remains is to study their behavior under a diffeomorphism. 

To prove the previous statement, begin by considering some diffeomorphism\footnote{We are using different manifolds for the sake of clarity of the explanation.} $\varphi:\mathcal{M}\to\mathcal{N}$. Not only does such a map define an isomorphism of the homology and cohomology of both manifolds, but also allows us to define the pullback bundle $\varphi^*T\mathcal{N}$. Due to the naturality of Stiefel-Whitney and Pontrjagin classes, we have that the characteristic classes over $\mathcal{M}$ are the pullback of the ones over $\mathcal{N}$, and thus, $p_{i}(\mathcal{M})=\varphi^*p_{i}(\mathcal{N})$, and analogously with Stiefel-Whitney classes. Furthermore, in the context of smooth manifolds these are nothing but differential forms. It is a standard result \cite{lee} that under these conditions, and given some generic top form $\omega$ over $N$, 
\begin{equation}
\int_{\mathcal{N}}\omega=\pm\int_{\mathcal{M}}\varphi^*\omega,
\end{equation}
where the sign on the right hand side depends on if the diffeomorphism is orientation preserving or orientation reversing. Now, in our particular case, for the one-parameter family of diffeomorphisms, $\mathcal{M}=\mathcal{N}$. Moreover, we know that $\varphi_{0}=\mathds{1}$, and therefore for every $s\in [0,T)$, $\varphi_{s}:\mathcal{M}\to\mathcal{M}$ is continuously connected to the identity, making it orientation preserving. Thus, the invariance of the Pontjagin and Stiefel-Whitney numbers under a flow defined by \eqref{eq:RicSol} follows.

\subsection{Pinching of necks}\label{sect:necks}
One of the phenomena which is tightly linked with the Ricci flow is the appeareance, and the subsequent pinching-off, of necks on a manifold. When a neck appears, it is very likely that it will shrink faster, compared to other parts of the manifold. This leads us to the conclusion that, at some point along the flow, a section of the manifold (located on the neck), will shrink to a point, thus reaching the end of the flow. However, the flow equations could still be well defined at all other parts of the manifold. Because of this, it would be interesting to have a mechanism to allow us to study the behavior of the manifold under the flow \textit{past} the pinching of these necks, that would reproduce the intuitive behavior of a neck pinching off, while at the same time changing the geometry as little as possible. We will now see that this procedure is compatible with the bordism relation. First, we will discuss the notion of a \textit{surgery} on a manifold, and then we will explain a way to adapt the notion of convergence in order to allow for such a modification to take place.

\begin{figure}[t!]
\centering
\def\svgwidth{0.65\columnwidth}
\import{./Images/}{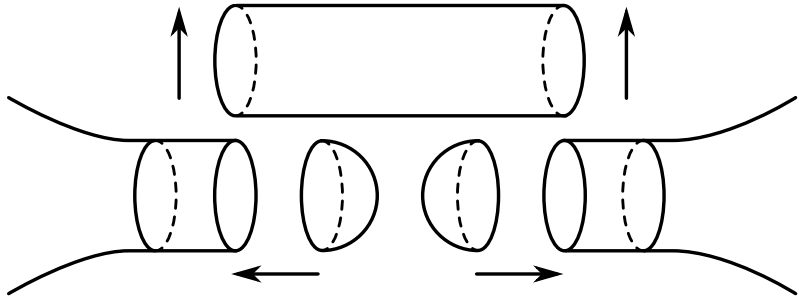}
\label{fig:holo}
\caption{Schematic of the surgery that is to be done when a neckpinch ocurs.}
\end{figure}
In mathematical terms, a topological neck $\mathcal{N}$ in a manifold $\mathcal{M}$ is a local diffeomorphism of a cylinder into $\mathcal{M}$:
\begin{equation}\label{eq:neck}
\mathcal{N}:\mathbb{S}^{n-1}\times [a,b]\to \mathcal{M}.
\end{equation}
The treatment of necks in the context of the Ricci flow is a very delicate one. A very detailed description of it is given in one of Hamilton's original papers, \cite{Hamilton4Man}. For our present purposes, it is only necessary to know that the strategy to follow when approaching a singularity in the form of a neck pinch is to perform a so-called surgery on the manifold. A surgery is a topological procedure which is based on the observation that the relation 
\begin{equation}
    \partial(\mathcal{M}\times \mathcal{N})=\partial\mathcal{M}\times \mathcal{N}\cup \mathcal{M}\times \partial\mathcal{N}
\end{equation}
Implies the two following possibilities
\begin{equation*}
    \mathbb{S}^p\times \mathbb{S}^{q-1}=\begin{cases}
    \partial\left(D^{p+1}\times \mathbb{S}^{q-1}\right) \\
    \partial\left(\mathbb{S}^p\times D^q\right)
    \end{cases}
\end{equation*}
Since the two manifolds on the right hand side have the same boundary, one could in principle cut one out, and glue the other one along the leftover boundary. This is precisely a surgery. In mathematical terms, we have the following
\begin{mydef}\label{def:surgery}
Let $\mathcal{M}$ be a manifold of dimension $n=p+q$, and let $\varphi:\mathbb{S}^p\times D^q$ be an embedding. Define another manifold $\mathcal{M}'$ as 
\begin{equation*}
    \mathcal{M}'=\left[\mathcal{M}\setminus \text{Int}(\text{Im}(\varphi))\right]\cup_{\varphi}\left[D^{p+1}\times \mathbb{S}^{q-1}\right].
\end{equation*}
We define the surgery to be the triple $(\mathcal{M}, \mathcal{M}', \varphi)$\footnote{Note that we have not specified \textit{on which} manifold we perform the surgery. This is because surgeries define an equivalence relation, and are in particular reflexive}.
\end{mydef}
In this way, looking at \eqref{eq:neck}, if we perform a surgery, we substitute the pinched neck by two smooth caps, and then keep flowing the resulting manifold. In a sense, this procedure realizes the inverse of a disconnected sum, and thus gives a connected sum decomposition of the original manifold. While this part is rather simple, in the context of the Ricci flow we are interested not only on the topology, but also on the geometry of the manifold. Hence, the problem arises when one tries to extend the metric smoothly from the bases of the neck to the smooth caps. A detailed account of this is contained in \cite{Hamilton4Man}.

What we have defined above is a general surgery, of which the substitution of a neck for two smooth caps is a particular example of. From the definition of a neck in \eqref{eq:neck}, we need only choose $p=n-1$ and $q=1$. In fact, the reader might have noticed that this particular example of a surgery is precisely a connected sum in reverse. Indeed, one of the topological appeals of the Ricci flow is that it provides a connected sum decomposition of the manifolds on which it is defined. This is very convenient for our purposes, as we know that the process of taking a connected sum defines a bordism $\mathcal{M}\sqcup\mathcal{N}\sim\mathcal{M}\#\mathcal{N}$. In general, any type of surgery in the sense of definition \ref{def:surgery} defines a bordism relation from $\mathcal{M}$ to $\mathcal{M}'$, by means of taking the so-called \textit{trace of the surgery}, defined as follows. In the conditions of definition \ref{def:surgery}, we define an $n+1$ dimensional manifold $\mathcal{W}$ by first taking $\mathcal{M}\times I$, where $I$ is the unit interval, and then attaching $D^{p+1}\times D^q$ along one of the "ends" of $\mathcal{M}\times I$. All together, we define 
\begin{equation}
    \mathcal{W}=\left[\mathcal{M}\times I\right]\cup_{\varphi\times \{1\}}\left[D^{p+1}\times D^q\right].
\end{equation}
It can be checked that this manifold indeed satisfies $\partial\mathcal{W}=\mathcal{M}\sqcup\mathcal{M}'$, by recalling the gluing procedure, the definition of $\mathcal{M}'$, and noting that one of the two boundaries of $D^{p+1}\times D^q$ is precisely the one that is glued to $\mathcal{M}'$. All in all, what we have learned form this discussion is that the cobordism class of a manifold is left unchanged by the Ricci flow, even if necks pinch off, due to the fact that the manifolds before and after the pinching are bordant.

Finally, there is one more aspect about neck pinching that we should look at, for the sake of completeness. Since we have just defined a way to extend the flow past a certain type of singularity, which in particular involves stopping the flow whenever a neck forms, only to turn it back on after the surgery, we need to define our notion of convergence. To add to that, this notion of convergence should be diffeomorphism invariant, so a notion of convergence of metric as tensors is unsatisfactory. Indeed, we need to specify what we mean when we say that some manifold $\mathcal{M}$ converges to some other manifold $\mathcal{M}'$ (or has $\mathcal{M}'$ as its limit). This is done by means of the smooth, pointed \textit{Cheeger-Gromov} convergence of manifolds. Here, the term \textit{pointed} means that the manifolds which compose the convergent sequence have a certain choice of distinguished points. The proper defintion is as follows
\begin{mydef}
A sequence of smooth, complete, and pointed Riemannian manifolds $(\mathcal{M}_i,g_i,p_i)$, with $p_i\in \mathcal{M}_i$ is said to converge smoothly to the smooth, complete, and pointed manifold $(\mathcal{M},g,p)$ as $i\to\infty$ if there exist
\begin{itemize}
    \item a sequence of compact sets $\{\Omega_i\}$, with $\Omega_i\subseteq \mathcal{M}$ exhausting $\mathcal{M}$ with $p\in \text{Int}(\Omega_i)$ for each $i$, and 
    
    \item  sequence of smooth maps $\varphi_i:\Omega_i\to \mathcal{M}_i$ which are diffeomorphic onto their image and satisfy $\varphi_i(p)=p_i$ for all $i$,
\end{itemize}
such that $\lim_{i\to \infty}\varphi^*_ig_i\to g$ smoothly in the sense that for all compact sets $K\subseteq \mathcal{M}$, the tensor $\varphi^*_ig_i-g$ and its covariant derivatives of all orders converge uniformly to 0 on $K$
\end{mydef}
This definition was taken from \cite{Topping2006LecturesOT}. The interested reader is directed to this reference for more details on this and almost all basic concepts relative to the Ricci flow.

\section{The main case study: $\Omega^{SO}_4$}
We introduce the Pontrjagin and Stiefel-Whitney numbers as the relevant topological invariants for the $SO$ structure. For the particular case of the 4-dimensional group, we discuss how the fact that the signatue $\sigma$ of $C\mathbf{P}^2$ is $\sigma(C\mathbb{P}^2)=1$ can be exploited to kill cobordism classes by the addition of more and more copies of the complex projective space. In particular, 16 copies of $C\mathbb{P}^2$ may be used to kill the class of $K3$.

After having killed the class of $K3$, we must give a physical (or otherwise) interpretation of the complex projective planes. One such interpretation is given in terms of the blow-up of singularities of the $K3$ surface. These are classified in terms of the ADE groups. We will explore this in detail in later sections.

Another interpretation is given by \cite{nCP2s}, where a metric on the connected sum of copies of $C\mathbb{P}^2$ is provided, together with the interpretation of the metric of magnetic monopoles in hyperbolic space. Upon investigation, the given metric is strikingly similar to the Kaluza-Klein monopole in 4 dimensions (also called the Taub-NUT geometry). Indeed, the requirements on the connection over the circle bundle present in the monopole coincide in the two cases.


\subsection{The relevant invariants}
We will now study the particular case of $\Omega_4^{SO}\simeq\mathbb{Z}$. Since the group is nontrivial, we know that the cobordism conjecture forces us to consider more restricted structures or add defects to the theory. We now cite two results which will be of use, both of which can be found in \cite{milnor}. The first is that, in the context of oriented bordism (namely, an $SO$-structure), the relevant bordism invariants are the Pontrjagin and Stiefel-Whitney numbers, which we denote for some manifold $M$ by $p_j(M)$ and $w_j(M)$. Another result is the Hirzebruch signature theorem, which states that the signature of a manifold can be written in terms of the Pontrjagin numbers and is thus, by the previous result, a cobordism invariant.

The four dimensional case is special in the above sense, as the Pontrjagin classes $p_j$ are cohomology classes in $H^{4j}(M,\mathbb{Z})$. Therefore $d=4$ is the first dimension for which we have a class which is \textit{a priori} nontrivial\footnote{Note that $p_0$ is axiomatically set to 1.}. Therefore, we will use the Pontrjagin numbers to label cobordism classes, and thus construct manifolds such that $p_1(M)=0$. We will see that in the examples we construct, this will also force all Stiefel-Whitney numbers to be 0, thus implying that their cobordism class is indeed trivial. As we have said in the introduction, the trivialization of these classes will also mean that for these examples, the structure group can be further reduced. However, we will be more interested in the perspective of adding defects.

\subsection{The case for $\mathbb{CP}^{2}$}

It is a known result \cite{milnor} that $\mathbb{CP}^{2}$ is the generator of $\Omega_{4}^{SO}$, the oriented cobordism group of dimension 4. Furthermore, Hirzebruch's signature theorem states that
\begin{equation}
p_{1}(\mathbb{CP}^{2})=3\Leftrightarrow \sigma(\mathbb{CP}^{2})=\frac{p_{1}(\mathbb{CP}^{2})}{3}=1.
\end{equation}
Since the signature is a cobordism invariant, we assign $[\mathbb{CP}^{2}]=1$ (using additive notation). Constructing a background with trivial cobordism class is now straightforward: we take two copies of $\mathbb{CP}^{2}$ and reverse the orientation in one of them. Thus, the background 
\begin{equation}
\mathcal{M}=\mathbb{CP}^{2}\sqcup\overline{\mathbb{CP}}^{2}
\end{equation}
bounds. Indeed, we have that, by the properties of Pontrjagin numbers
\begin{equation}
\sigma(\mathcal{M})=\sigma(\mathbb{CP}^{2})+\sigma(\overline{\mathbb{CP}}^{2})=1-1=0 
\end{equation}
so we conclude that $[\mathcal{M}]=0$.

As an additional note, we can use the fact that the disjoint union of two manifolds is cobordant to their connected sum, and then use a long exact sequence of a pair and a Mayer-Vietoris argument to show that
\begin{equation}\label{eq:hom2cp2}
H_{p}(\mathbb{CP}^{2}\#\overline{\mathbb{CP}}^{2})=
\left.\begin{cases}
\mathbb{Z} \hspace{1.5cm}& p=0,4 \\
\mathbb{Z}\oplus\mathbb{Z} &p=2 \\
0 &\text{else}
\end{cases}\right\}=H^{p}(\mathbb{CP}^{2}\#\overline{\mathbb{CP}}^{2})
\end{equation}
This tells us that the manifold we have just constructed is not homeomorphic to $\mathbb{S}^{4}$, even if cobordant to it.

\subsection{The case for $K3$}
One of the manifolds that gives rise to an interesting scenario in our study of cobordism classes and the Ricci flow is $K3$. In \cite{mcnamara2019cobordism}, the authors have already studied this manifold from the point of view of its cobordism class. For our present purposes, an important result that is cited there is that 
\begin{equation}
    \sigma(K3)=-16
\end{equation}
Again, since the signature of a manifold is a cobordism invariant, this defines a class in the cobordism group $\Omega_4^{SO}$. With $\mathbb{CP}^2$ as a generator, we can write this as 
\begin{equation}
    [K3]=-16[\mathbb{CP}^2]
\end{equation}
where, of course, we use the notation of addition to mean the disjoint union of 16 projective planes with their orientation reversed. From this, we can easily construct a manifold which has a trivial cobordism class, given by 
\begin{equation}
    K3\sqcup 16\mathbb{CP}^2,
\end{equation}
which by the above argument has vanishing signature and hence bounds a $5$-manifold.

The interesting part about the above scenario is that, if we turn on the regular Ricci flow equations, all of the $\mathbb{CP}^2$ (being that they are Einstein spaces with positive sectional curvature) shrink to a point. However, $K3$ is a Calabi-Yau manifold, and in particular Ricci-flat. Therefore, it is a fixed point of the flow equations, and remains invariant under them. Furthermore, because the disjoint union and the connected sum of manifolds are cobordant, it is irrelevant if we consider the above manifold as $K3\sqcup \mathbb{CP}^2\sqcup\ldots\sqcup\mathbb{CP}^2$ or as $K3\# \mathbb{CP}^2\#\ldots\#\mathbb{CP}^2$. Therefore, we do not need to worry about the necks that are formed by the connected sums. It follows that in this setup, at the end of the flow, what remains is the original $K3$, and all of the remaining copies of projective planes have shrunk to points. This loose discussion represents the idea that there is some connection, due to working under the bordism relation, of trivializing the bordism class at the topological level, and adding defects to the manifold. The connection will be given by the blow-ups (minimal resolutions) of the $K3$ singularities.




\section{Blowing up points and connected sums of copies of $\mathbb{CP}^2$}
In the context of algebraic geometry, there is a procedure known as "blowing up". The underlying idea is to substitute a certain subspace of some total space (usually some complex variety) by another, more well-behaved space. Furthermore, one does it in such a way that the resulting total space looks like the one we started with away from the blown-up subspace. While this procedure can be done just for the sake of it, it fully comes into fruition in the context of the resolution of singularities, where there is some sort of singular subspace (e.g. a conifold point), which is then substituted by a smooth space, blended into the total space in such a way that the combination of the two is smooth throughout. In this section, we will focus on the case where the singular subspace is just a singular point, we will give the explicit construction of its blow up, and we will see how it relates to manifolds containing connected sums of complex projective spaces.

\subsection{The blow up of a point}

We will begin by considering $\mathbb{C}^2$. The only generality that is lost in doing so is that of dimensionality, which will not be much of a problem, as the construction can be easily generalized to higher dimension. Since all of the considerations we will make will be local in nature, we can always reduce the case of an arbitrary complex manifold of dimension $n$ to that of $\mathbb{C}^n$. Now, suppose we wanted to blow up the origin $\mathcal{O}\in\mathbb{C}^2$. This would produce a new space, which we shall name $Bl_{\mathcal{O}}(\mathbb{C}^2)$, and define as follows. Consider the set of all paths in $\mathbb{C}^2$ whose endpoint is $\mathcal{O}$, and for each of them, take their tangent line at the origin. This set of lines through the origin is precisely the definition of $\mathbb{CP}^1$, each line $l$ is given by a point $[y_1:y_2]\in\mathbb{CP}^1$ (where we have used homogeneous coordinates) and vice versa. Define $Bl_{\mathcal{O}}(\mathbb{C}^2)\subseteq \mathbb{C}^2\times \mathbb{CP}^1=\{(x_1,x_2,[y_1:y_2])|(x_1,x_2)\in\mathbb{C}^2, [y_1:y_2]\in\mathbb{CP}^1\}$ as the zero locus of 
\begin{equation}\label{eq:blowup}
    p=x_1y_2-x_2y_1,
\end{equation}
namely, all of the points verifying $x_1y_2=x_2y_1$. This condition is equivalent to saying that the point $(x_1,x_2)$ is contained in the line $l\in\mathbb{CP}^1$, since the above equations are obtained by imposing linear dependence of the set of vectors defined by 
\begin{equation}
    \begin{pmatrix}
    x_1 & x_2 \\
    y_1 & y_2
    \end{pmatrix},
\end{equation}
by requiring that the rank of the matrix be 1.

Let us look at \eqref{eq:blowup} more closely. Whenever $(x_1,x_2)\neq (0,0)$, then $(y_1,y_2)$ are completely determined up to scaling, which means that they are actually completely determined in $\mathbb{CP}^1$. On the other hand, at the origin, $y_1$ and $y_2$ can take any value, and \eqref{eq:blowup} vanishes tivially, so both coordinates range over all of $\mathbb{CP}^1$. These considerations lead us to define the following projection map
\begin{equation}
    \begin{split}
        \pi:\hspace{0.2cm}\mathbb{C}^2\times\mathbb{CP}^1 \hspace{0.2cm} &\to \hspace{0.3cm}\mathbb{C}^2 \\
        (x_1,x_2,[y_1:y_2])&\mapsto (x_1,x_2)
    \end{split}
\end{equation}
By the above discussion, this map is an isomorphism away from the origin, but at the origin itself we have that $\pi^{-1}(0,0)=\mathbb{CP}^1$.

\subsection{Blowing up and connected sums}
In this section, we will try to draw a connection between the process of blowing up a point and that of taking the connected sum of the original manifold with $\mathbb{CP}^n$. This is encapsulated in the following proposition:
\begin{prop}
Let $x\in\mathcal{M}$ be a point in a complex $n$-dimensional manifold $\mathcal{M}$. Then the blow up $Bl_x(\mathcal{M})$ is diffeomorphic as an oriented manifold to $\mathcal{M}\#\overline{\mathbb{CP}^n}$.
\end{prop}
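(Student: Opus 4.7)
The plan is to exploit the fact that both the blow-up and the connected sum are local operations. The projection $\pi: Bl_x(\mathcal{M}) \to \mathcal{M}$ is a biholomorphism outside the exceptional divisor $E=\pi^{-1}(x)$, and a small neighborhood of $x$ can be chosen biholomorphic to an open ball in $\mathbb{C}^n$; likewise $\mathcal{M}\#\overline{\mathbb{CP}^n}$ differs from $\mathcal{M}$ only inside a small ball about $x$. Hence it suffices to produce an orientation-preserving diffeomorphism
\begin{equation*}
    Bl_0(\mathbb{C}^n) \;\cong\; \mathbb{C}^n \,\#\, \overline{\mathbb{CP}^n}
\end{equation*}
which is the identity outside a compact set, and then transport it back via the chart.

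For the model case, the key observation is that both sides admit a common description in terms of the tautological line bundle $\mathcal{O}_{\mathbb{CP}^{n-1}}(-1)$. On the left, the incidence relation defining the blow-up (the $n$-dimensional analog of \eqref{eq:blowup}) says that $(x, [y]) \in Bl_0(\mathbb{C}^n)$ iff the vector $x$ lies in the line $\ell_{[y]}$; this identifies $Bl_0(\mathbb{C}^n)$ with the total space of $\mathcal{O}(-1)$, with $E$ as the zero section. On the right, $\mathbb{CP}^n$ decomposes as a hyperplane $\mathbb{CP}^{n-1}$ at infinity together with its affine complement; a tubular neighborhood of the hyperplane is the disk bundle of its normal bundle $\mathcal{O}(+1)$, and passing to $\overline{\mathbb{CP}^n}$ reverses orientation, yielding the disk bundle of $\mathcal{O}(-1)$. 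Splitting $Bl_0(\mathbb{C}^n)$ into a tubular neighborhood of $E$ (matching $\overline{\mathbb{CP}^n}$ minus a ball via the preceding step) and its complement (mapped by $\pi$ onto $\mathbb{C}^n$ minus a ball) and gluing along the common $S^{2n-1}$ reproduces the connected-sum construction.

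The step I expect to be the main obstacle is the orientation bookkeeping. The exceptional divisor in $Bl_0(\mathbb{C}^n)$ has self-intersection $-1$ (it is the zero section of $\mathcal{O}(-1)$), while the hyperplane in $\mathbb{CP}^n$ has self-intersection $+1$; these signs can only be reconciled by reversing the orientation of $\mathbb{CP}^n$, which is precisely why the proposition requires $\overline{\mathbb{CP}^n}$ rather than $\mathbb{CP}^n$. One must also verify that the boundary identification $S^{2n-1} \to S^{2n-1}$ induced by the two tubular neighborhood descriptions is isotopic to the standard orientation-preserving diffeomorphism used to define the connected sum, so that the resulting smooth oriented manifold matches $\mathcal{M}\#\overline{\mathbb{CP}^n}$ on the nose. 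This is largely an unwinding of the tautological bundle structure, but it is the delicate step that justifies the global patching in the first paragraph.
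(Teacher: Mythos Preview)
Your argument is correct and follows the standard bundle-theoretic route: identify $Bl_0(\mathbb{C}^n)$ with the total space of $\mathcal{O}_{\mathbb{CP}^{n-1}}(-1)$, identify a punctured $\overline{\mathbb{CP}^n}$ with the disk bundle of the same line bundle via the normal bundle of the hyperplane, and glue. The paper proceeds differently: after the same localisation to a ball, it writes down an \emph{explicit} orientation-preserving diffeomorphism
\[
\hat{\xi}:\ \mathbb{CP}^n\setminus f\!\left(\tfrac{1}{2}\overline{D}\right)\longrightarrow Bl_0(D),\qquad
[x_0:x]\longmapsto\left(\frac{x_0}{2\lVert x\rVert^2}\,x,\ [x]\right),
\]
where $f:D\to\mathbb{CP}^n,\ x\mapsto[1:x]$ is the (orientation-reversing) embedding used to form the connected sum. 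The paper then checks by hand that $\hat{\xi}$ restricts to the required gluing map on the annulus and surjects onto $Bl_0(D)$. Your approach is cleaner and immediately generalises (and indeed the paper later re-introduces the $\mathcal{O}(-1)$ picture when discussing line bundles), while the paper's explicit formula makes the orientation bookkeeping you flag as delicate completely concrete: the inversion $z\mapsto 1/z$ visibly reverses orientation on each $\mathbb{CP}^1$ slice, so no separate isotopy argument for the boundary $S^{2n-1}$ is needed.
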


\begin{proof}
Because the blow up of a point is a local transformation, we can set, without loss of generality, $\mathcal{M}=D^n=\{z\in\mathbb{C}^n| \lVert z\rVert^2<1\}$, and $x=0$. By the definition introduced in the preceding section, the blow up is given by
\begin{equation}\label{eq:blowup2}
    Bl_0(D)=\{(x,z)\in D\times\mathbb{CP}^{n-1}| x_iz_j=x_jz_i, \ \forall \ i\neq j\}
\end{equation}

Now we have to construct the connected sum. To do so, we must first embed two disks $D^n$ in our original manifolds, quotient out by their images, and then glue along the remaining boundaries. Because we have set $X=D^n$, one of the embeddings can just be taken as the identity $\text{id}:D^n\to X$. For the remaining one, in order to maintain the overall orientation of the manifold, we must choose an orientation reversing map $f:D^n\to\mathbb{CP}^n$. We thus define it to be $x\mapsto[1:x]$. The fact that this is indeed an orientation reversing map can be seen by considering inward pointing rays on each of the coordinates of $D^n$. This allows us to restrict the image rays to $\mathbb{CP}^1\subset \mathbb{CP}^n$ by setting the remaining (homogeneous) coordinates to 0. Then, since $\mathbb{CP}^1\simeq \mathbb{S}^2$, and $f$ can be seen locally as an inversion ($z\mapsto 1/z$) at the level of the regular coordinates obtained from the homogeneous ones, we see that the orientation of the rays is reversed.

Once we have the embedding maps, we consider the original manifolds and remove an annulus inside the embedded disk. The fact that it is an annulus and not a proper disk is topologically irrelevant. For $X$ this is trivial, and for $\mathbb{CP}^n$ this is given by
\begin{equation}\label{eq:cpdisk}
     \mathbb{CP}^n\setminus f\left(\frac{1}{2}\overline{D}\right)\coloneqq \{[x_0:x] \mid |x_0|<2\lVert x\rVert  \}
\end{equation}
Note that we are using homogeneous coordinates to describe the manifold, and hence the particular numerical value of $x_0$ is irrelevant, and the removed disk can be encapsulated in the restriction on the right hand side of \eqref{eq:cpdisk}. All that remains to show our desired result is to construct the gluing diffeomorphism (so that we can define $X\#\mathbb{CP}^n$), and a diffeomorphism from the connected sum to $Bl_0(D)$. To do so, we will kill two birds with one stone, and use the fact that $Bl_0(D)$ is isomorphic to $D$ away from the origin. Hence, the gluing map from $\mathbb{CP}^n$ to $D$ will be the same as the one to $Bl_0(D)$ (recall that we have chosen $X=D$). We can give an explicit orientation preserving map which serves this purpose:
\begin{align*}
    \hat{\xi}:\mathbb{CP}^n\setminus f\left(\frac{1}{2}\overline{D}\right)&\to Bl_0(D) \\
    [x_0:x]&\mapsto \left(\frac{x_0}{2\lVert x\rVert^2}x,[x]\right)
\end{align*}
We will first check that it restricts to the gluing map on the embedded disk, namely on $f(D)\setminus f(\frac{1}{2}D)$. In the above notation, this means that we must set $x_0=1$, and let $\lVert x\rVert < 1$ (since this is the coordinate on the disk). Both of these conditions restrict the values of $x$ to be on the annulus $\frac{1}{2}<\lVert x\rVert <1$. Therefore, $x$ itself is defined on the annulus (as opposed to the embedded annulus on $\mathbb{CP}^2$). At the level of $\hat{\xi}$, the image of this set is easily seen to be given by
\begin{equation}\label{eq:imhatxi}
    \hat{\xi}\left(f(D)\setminus f\left(\frac{1}{2}D\right)\right)=\left\{\left(\frac{x}{2\lVert x\rVert^2},[x]\right)\in Bl_0(D)\mid \frac{1}{2}<\lVert x\rVert <1\right\}
\end{equation}
Note that the fact that the image is defined on $Bl_0(D)$ means that it is still subject to the set of equations which defined it (c.f. \eqref{eq:blowup2}). In particular the above restriction is seen to be a diffeomorphism of the annulus $D\setminus \frac{1}{2}D$ to itself, namely $x\mapsto (\frac{x}{2\lVert x\rVert})$. This map precisely defines the gluing scheme. Finally, due to the blow-up equations, the remaining components of the image are completely fixed in $\mathbb{CP}^{n-1}$

To see that this map also defines a diffeomorphism to $Bl_0(D)$, we go back \eqref{eq:blowup}. The only possibility for $\hat{\xi}$ to not be differentiable is that $x=0$. However, this would imply $|x_0|<0$, which is impossible by \eqref{eq:cpdisk}. Furthermore, we can see that the behavior at the singular point $0\in D$, which was the original purpose of the blow-up, is reproduced by this map, since $x_0=0$ is an allowed value which, by both \eqref{eq:cpdisk} and the equations that define the blow-up, lets $x$ range freely over all of $\mathbb{CP}^{n-1}$. Finally, we can see that $\text{Im}(\hat{\xi})$ is all of $Bl_0(D)$ by taking the norm of the first $n$ components of \eqref{eq:imhatxi} and again noting that by \eqref{eq:cpdisk} $|x_0|<2\lVert x\rVert$, thus arriving at
\begin{equation}
    \frac{|x_0| \lVert x\rVert}{2\lVert x\rVert^2}<1
\end{equation}
\end{proof}

\subsection{Blowups and magnetic monopoles}
Our end goal is to somehow apply the Ricci flow machinery to the manifold which is imposed by the cobordism conjecture. In principle, to be able to do so, we first need to write down an explicit form of the metric for the whole manifold. This is tackled in \cite{nCP2s}, and both the result, and the method are relevant for our purposes, since a connection is drawn between the successive blowup of points on a manifold and the connected sum of projective planes. This is further supported by the ADE classification of K3 singularities, and its interplay with blowups of said singularities. This last point will be left for a following section, and we will rather focus our attention here on the geometric side of the construction. We will leave out some details, which can be read on the original publication.

The starting point for the construction of the metric is the Gibbons-Hawking \textit{ansatz} for gravitational multi-instantons, first proposed in \cite{GibbonsHawking}. This is generalized to allow for an isometric circle action on the manifold, and reformulated in geometrical terms, as solutions of a certain set of partial differential equations. The central result of the paper, out of which all of the results are derived, is the following:
\begin{prop}\label{prop:cp}
Let $\xi>0$  and $u$ be smooth real-valued functions on an open set $U\subseteq \mathbb{R}^3$ which satisfy 
\begin{align}
    u_{xx}+u_{yy}+(e^u)_{zz}&=0 \\
    \xi_{xx}+\xi_{yy}+(\xi e^u)_{zz}&=0
\end{align}
Suppose further that the de Rham class of the closed 2-form
\begin{equation}
    \frac{1}{2\pi}\alpha\coloneqq \frac{1}{2\pi}\left(\xi_x dy\wedge dz+\xi_y dz\wedge dx+ (\xi e^u)_z dx\wedge dy\right)
\end{equation}
is contained in $H_{dr}^2(U,\mathbb{Z})$ (i.e., it is integral). Let $M\to U$ be a circle bundle such that its first Chern class is given by $\left[c_1(M)\right]_{\mathbb{R}}=\left[\frac{\alpha}{2\pi}\right]$, and let $\omega$ be a connection 1-form on the bundle whose curvature is $\alpha$. If $U$ is simply connected, both $M$ and $\omega$ are determined up to gauge equivlence. Then 
\begin{equation}
    g=e^u\xi(dx^2+dy^2)+\xi dz^2+\xi^{-1}\omega^2
\end{equation}
is a K\"ahler metric on $M$ whose scalar curvature vanishes. Conversely, any scalar-flat K\"ahler surface with an $\mathbb{S}^1$ action can be locally described by the above result.
\end{prop}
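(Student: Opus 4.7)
The plan is to break the proof into four pieces: construct the bundle $M$ and connection $\omega$, fix an almost complex structure $J$, verify that $(g,J)$ is K\"ahler, and compute the scalar curvature. The integrality hypothesis combined with closedness of $\alpha$ handles the first step, and the substance of the argument sits in the integrability of $J$ and in the scalar-flatness check.

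First I would verify $d\alpha=0$ by direct expansion, which gives $d\alpha=(\xi_{xx}+\xi_{yy}+(\xi e^u)_{zz})\,dx\wedge dy\wedge dz$, vanishing by the second PDE. Together with the integrality of $[\alpha/2\pi]$, the classification of $U(1)$-bundles by $H^2(U,\mathbb{Z})$ then produces a principal circle bundle $\pi:M\to U$ whose first Chern class realises $[\alpha/2\pi]$, and Chern--Weil theory supplies a connection 1-form $\omega$ with $d\omega=\pi^{*}\alpha$. Because $U$ is simply connected, $H^1(U,\mathbb{R})=0$ and the remaining gauge freedom for $\omega$ is exact, so $(M,\omega)$ is determined up to isomorphism.

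Next I would introduce the $(1,0)$-coframe $\theta^1=e^{u/2}\sqrt{\xi}\,(dx+i\,dy)$ and $\theta^2=\sqrt{\xi}\,dz+i\,\xi^{-1/2}\omega$, and define $J$ as the almost complex structure for which $\{\theta^1,\theta^2\}$ spans the holomorphic cotangent space. One checks immediately that the given $g$ is the Hermitian metric $\theta^1\bar\theta^1+\theta^2\bar\theta^2$ and that its fundamental form is $\Omega=\tfrac{i}{2}(\theta^1\wedge\bar\theta^1+\theta^2\wedge\bar\theta^2)=e^u\xi\,dx\wedge dy+dz\wedge\omega$. Closedness of $\Omega$ is then a one-line calculation: the term $d(e^u\xi\,dx\wedge dy)=(e^u\xi)_z\,dz\wedge dx\wedge dy$ exactly cancels $-dz\wedge\alpha$ using the explicit form of $\alpha$. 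Integrability of $J$ follows from Newlander--Nirenberg by expanding $d\theta^1$ and $d\theta^2$ in the basis $\{\theta^i,\bar\theta^j\}$ and showing that the $(0,2)$-component (a multiple of $\bar\theta^1\wedge\bar\theta^2$) vanishes; this is where the Toda-type equation $u_{xx}+u_{yy}+(e^u)_{zz}=0$ enters in an essential way, and I expect it to be the most delicate bookkeeping step in the proof.

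With the K\"ahler structure in place, the scalar curvature is computed from the Ricci form $\rho=-i\,\partial\bar\partial\log\det(g_{j\bar k})$, and in the chosen frame $\det(g_{j\bar k})$ is proportional to $\xi e^u$. Tracing $\rho$ against $\Omega$ produces a linear combination of $u_{xx}+u_{yy}+(e^u)_{zz}$ and $\xi_{xx}+\xi_{yy}+(\xi e^u)_{zz}$ that vanishes identically, yielding $s=0$. For the converse, I would invoke the standard local normal-form result for scalar-flat K\"ahler surfaces with an isometric $\mathbb{S}^1$ action: the moment map for the action supplies the coordinate $z$, the distribution transverse to the Killing vector $K$ and $JK$ determines the conformal class on the $(x,y)$-slice via isothermal coordinates, and the residual geometric data packages into the functions $u,\xi$ satisfying precisely the stated PDEs. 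This is essentially the LeBrun--Tod correspondence, and the principal obstacles throughout the argument are the integrability check in step two and the identification of $z$ with the moment map in the converse, both of which rely on carefully exploiting the Toda equation beyond its formal appearance.
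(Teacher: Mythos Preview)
The paper does not actually prove this proposition. It is quoted from LeBrun's paper (cited as \cite{nCP2s}) and used as a black box: the text introduces it with ``The central result of the paper, out of which all of the results are derived, is the following'' and then proceeds directly to applications (the Gibbons--Hawking specialisation, the Burns metric, and the $n\mathbb{CP}^2$ construction) without supplying any argument. So there is no ``paper's own proof'' to compare against.

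That said, your sketch is essentially the route LeBrun himself takes, and most of the checks you outline are correct as stated: the computation of $d\alpha$, the bundle construction, the coframe $\theta^1,\theta^2$, the identification $\Omega=e^u\xi\,dx\wedge dy+dz\wedge\omega$, and the verification $d\Omega=0$ all go through exactly as you describe. One point to be careful about is the attribution of the two PDEs. In LeBrun's argument the Toda equation $u_{xx}+u_{yy}+(e^u)_{zz}=0$ does \emph{not} enter in the integrability of $J$; the almost complex structure defined by your $\theta^1,\theta^2$ is integrable for any $u$ and $\xi$ (you can check directly that $d\theta^1$ and $d\theta^2$ have no $\bar\theta^1\wedge\bar\theta^2$ component without invoking either PDE). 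Rather, the Toda equation is precisely the statement that the conformal structure $e^u(dx^2+dy^2)+dz^2$ on $U$ is Einstein--Weyl, and it is this Einstein--Weyl condition together with the linear equation on $\xi$ that forces the scalar curvature to vanish. So in your step on scalar curvature both PDEs are needed, not just the second, and the integrability step is lighter than you anticipate. Otherwise the plan is sound.
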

Note that the above metric reduces to the Gibbons-Hawking \textit{ansatz} if $u=0$. The space that this metric represents is also knwon as the Taub-NUT space. This is very relevant to our goal for two reasons. The first one is that the Taub-NUT space describes a KK monopole: a purely geometrical solution to Einstein's equations in $d+1$ dimensions, which from the perspective of the $d$ dimensions transverse to the circle bundle describes a magnetic monopole. Moreover, the Taub-NUT solution allows us to superpose $N$ of these monopoles. While we will not deal with Taub-NUT spaces, but rather with a space that is very similar to it, a lot of the properties will be shared between the two. Thus, it is useful to keep this solution in mind for the following.

The connection that all of this has with blow-ups (and thus with connected sums of copies of $\mathbb{CP}^2$) is given by the so-called Burns metric. This is a scalar flat K\"ahler metric on $Bl_0(\mathbb{C}^2)$, the blow-up of $\mathbb{C}^2$ at the origin. There are several coordinate representations of this metric, and we will come back to this in a later section. For now, it suffices to know that this metric fits into the description of proposition \ref{prop:cp} by setting 
\begin{align}
    u &=\log 2z \\
    \xi &=\frac{1}{2z}+F_m \label{eq:cfV}
\end{align}
where $m$ is some parameter which appears on the metric, and $F_m(x,y,z)$ is a positive function on $z>0$, and is defined on the complement of $(x,y,z)=(0,0,m/2)$. In particular, this makes sense when the base space is $\mathbb{H}^3\coloneqq \{(x,y,z)\in \mathbb{R}^3\mid z>0\}$.

If one defines a new variable $V=2z\xi$, and further defines $q=\sqrt{2z}$, it turns out that the second PDE from proposition \ref{prop:cp} can be recast as
\begin{equation}\label{eq:vdef}
    \Delta V=0,
\end{equation}
where $\Delta$ is the Laplace-Beltrami operator associated to the metric
\begin{equation}\label{eq:hdef}
    h=\frac{dx^2+dy^2+dq^2}{q^2}.
\end{equation}
This is just the usual metric for hyperbolic three-space, which we denote by $\mathcal{H}^3$. In terms of $V$ and $h$, the metric can be written as 
\begin{equation}
    g=q^2[Vh+V^{-1}\omega^2]
\end{equation}
Furthermore, we can use the hyperbolic metric distance to rewrite the function $\xi$ (and thus $V$) that appears in the metric. With the metric \eqref{eq:hdef} for $\mathbb{H}^3$, the hyperbolic distance from a generic point $(x,y,q)\in \mathbb{H}^3$ to $(0,0,q_0)\in \mathbb{H}^3$ can be written as 
\begin{equation}
    \rho=\cosh^{-1}\left(\frac{x^2+y^2+q^2}{2qq_0}\right)
\end{equation}
After a couple of algebraic manipulations, we can recast $V$ as 
\begin{equation}
    V=1+\frac{1}{e^{2\rho}-1}
\end{equation}
But this last summand is precisely the Green's function $G$ for the Laplace-Beltrami operator relative to the hyperbolic volume form, and with normalization given by $\Delta G=-2\pi\delta$. Consequently, the Burns metric can also be thought as describing a magnetic monopole in hyperbolic 3-space.

Up to now, we have seen that the metric for a blowup of $\mathbb{C}^2$ at the origin can be recast (or, rather, understood) as a kind of generalized Gibbons-Hawking metric, which looks eerily similar to a Taub-NUT geometry. Now, the next step, as is commonplace precisely with the Taub-NUT metric, is to allow for multiple monopoles for our metric. This can be done as follows: let $\{p_j=(x_j,y_j,z_j)\}_{j=1}^n\subset \mathbb{H}^3$ be a collection of points on the upper half-space, and let 
\begin{equation}
    G_j=\frac{1}{e^{2\rho_j}-1}
\end{equation}
be the Green's functions associated to each of the points in the sense of the above paragraph. Define
\begin{equation}\label{eq:defV}
    V\coloneqq 1+\sum_{i=1}^nG_j                      
\end{equation}
Then, we have that $\xi=\frac{V}{2z}$ is a positive solution for 
\begin{equation}
    \xi_{xx}+\xi_{yy}+(2z\xi)_{zz}=-2\pi\sum_{i=1}^n\delta_{p_i}.
\end{equation}
This defines an integral cohomology class as 
\begin{equation}
    \frac{1}{2\pi}\alpha= \frac{1}{2\pi}\left(\xi_x dy\wedge dz+\xi_y dz\wedge dx+ (\xi e^u)_z dx\wedge dy\right)\in H^2(\mathbb{H}^3\setminus \{p_i\},\mathbb{Z})
\end{equation}
Finally, let $X\to\mathbb{H}^3\setminus\{p_i\}_{i=1}^n$ be the circle bundle whose first Chern class is $\frac{1}{2\pi}[\alpha]$. In turn, this means that $X$ has a connection 1-form $\omega$ whose curvature is given (up to gauge equivalence) by $\alpha$. Then, define a Riemannian metric on $X$ by 
\begin{equation}\label{eq:lametrica}
    g=q^2[Vh+V^{-1}\omega^2]
\end{equation}
By proposition \ref{prop:cp}, this is a K\"ahler metric with vanishing scalar curvature. However, it is still not quite the metric that we are looking for. For one, it is not defined over the "monopole locations", namely the $\{p_i\}_{i=1}^n$. However, it can be smoothly extended to them by attaching points at the monopole locations. This is done in detail in \cite{nCP2s}, and contains very technical details which are not very relevant for our purposes. The only part of this discussion which we will use is a local expression for the metric around the monopole locations. This will be discussed in a later section.

In summary, we are able to construct the K\"ahler metric \eqref{eq:lametrica} which, by the above discussion, can be interpreted as describing the blow-up of $\mathbb{C}^2$ at $n$ points, which can be furhter viewed as the locations of magnetic monopoles in hyperbolic 3-space. Since blow-ups and connected sums with copies of $\mathbb{CP}^2$ are diffeomorphic, a consequence of this construction is that the conformal class of the metric \eqref{eq:lametrica} represents a metric on $n\mathbb{CP}^2\coloneqq \mathbb{CP}^2\#\stackrel{n)}{\ldots}\#\mathbb{CP}^2$. In fact, a global representative for the conformal metric on $n\mathbb{CP}^2$ is provided by 
\begin{equation}\label{eq:lametrica2}
    g=\text{sech}^{2}(\rho)[Vh+V^{-1}\omega^2]
\end{equation}
In the following, we will use this very metric, as well as some of the intermediate steps that led to it, to try to apply the Ricci flow equations to $n\mathbb{CP}^2$. In this way, we hope to use the flow equations to obtain information about backgrounds which are allowed by the cobordism conjecture.

\subsection{The Burns metric}

Now that we have some geometrical information about the manifold $n\mathbb{CP}^2$, the next step is to set up some kind of flow equations, or at least to approximate the manifold's behavior under them. Because of the above discussion, the first step that we will take in this endeavor is to look closely at the Burn's metric. As we have mentioned before, this is a zero scalar curvature metric on $Bl_0(\mathbb{C}^2)$ (we are using the notation from the previous sections for the blowup). Because we have seen that blowups at a point in a manifold are diffeomorphic to connected sums of said manifold with $\mathbb{CP}^2$, we hope to extract some information out of this.

\subsubsection {First form of the Burns metric}

There is more than one way to represent the Burn's metric. Here we will only make use of two of them. In the first representation, the K\"ahler nature of the metric is manifest, as it is given by the following K\"ahler potential defined over $\mathbb{C}^2\setminus \{0\}$:
\begin{equation}
    K(z,\overline{z})=\frac{1}{2}\left(\lVert z \rVert^2+m\log\lVert z\rVert^2\right)
\end{equation}
where $m>0$ is a positive real constant, and $\lVert z\rVert^2=z_1\overline{z}_1+z_2\overline{z}_2$. Thus, the metric can be obtained as 
\begin{equation}
    g_{i\overline{j}}=\partial_i\overline{\partial}_{\overline{j}}K(z,\overline{z})
\end{equation}
Going through the computations, this yields 
\begin{equation}
    g_{1\overline{1}}=\frac{1}{2}\left(1+\frac{m\mid z_2\mid^2}{\left(\lVert z\rVert^2\right)^2}\right);\hspace{1.5cm} g_{2\overline{1}}=-\frac{m}{2}\left(\frac{z_1}{\left(\lVert z\rVert^2\right)^2}\overline{z}_2\right)
\end{equation}
with all other entries of the metric given by the fact that it is Hermitean (so we need only exchange $1\Leftrightarrow 2$):
\begin{equation}
    g=\frac{1}{2}\begin{pmatrix}
    \left(1+\frac{m\mid z_2\mid^2}{\left(\lVert z\rVert^2\right)^2}\right) & -m\left(\frac{z_2}{\left(\lVert z\rVert^2\right)^2}\overline{z}_1\right) \\
    -m\left(\frac{z_1}{\left(\lVert z\rVert^2\right)^2}\overline{z}_2\right) & \left(1+\frac{m\mid z_1\mid^2}{\left(\lVert z\rVert^2\right)^2}\right)
    \end{pmatrix}
\end{equation}

For Kähler manifolds, there is a quick way of computing the Ricci tensor, namely by 
\begin{equation}\label{eq:rickaeh}
    R_{i\overline{j}}=-\partial_i\overline{\partial}_{\overline{j}}\log\det g.
\end{equation}
A quieck computation reveals that 
\begin{equation}
    4\det g=1+\frac{m}{\lVert z\rVert^2}.
\end{equation}
Note that setting $m=0$ is consistent with recovering the usual flat metric on $\mathbb{C}^2$. Since we are taking derivatives of logs, we may also drop the overall factor of $\frac{1}{4}$, thus arriving at 
\begin{equation}
R_{i\overline{j}}=-\partial_i\overline{\partial}_{\overline{j}}\log\left(1+\frac{m}{\lVert z\rVert^2}\right)
\end{equation}
The resulting tensor is given by 
\begin{equation}
    R_{2\overline{1}}=\frac{mz_1}{\left[\left(\lVert z\rVert^2\right)^2+m\lVert z\rVert^2\right]^2}\left(2\lVert z\rVert^2\overline{z}_2+m\overline{z}_2\right);\hspace{0.5cm} R_{1\overline{1}}=-\frac{m}{\left(\lVert z\rVert^2\right)^2+m\lVert z\rVert^2}+\frac{\lvert z_1\rvert^2\left(2\lVert z\rVert^2+m\right)}{\left[\left(\lVert z\rVert^2\right)^2+m\lVert z\rVert^2 \right]^2}
\end{equation}
and the other components are again obtained by the exchange of indices. As one can appreciate, the resulting Ricci flow equations become rather complicated, and far form the case for an Einstein manifolds. It would have been our hope to be able to get some sort of flow equation for the parameter $m$, but it seems that our hopes are in vain. Even if we tried to do so out of the Ricci flow equation for the scalar curvatrue, we are met with the fact that this metric is, by construction, of zero scalar curvature. Hence, if we were to set $m=m(s)$ for some flow parameter $s$, our equation would render $m$ constant along the flow.

Given the explicit formula for the Ricci tensor \eqref{eq:rickaeh}, another idea that one could have is to try to take the limit where $m\ll\lVert z\rVert$, in order to get rid of the logarithm. Unfortunately, this has led me nowhere. The only interesting point in the "moduli space", where the Ricci tensor turns out to vanish, requires $\lVert z\rVert^2\to 0$, which is inconsistent with the limit. On the other hand, we have noted that (even at the level of the Kähler potential itself) the limit $m\to 0$ corresponds to flat space.

\subsection{Second form of the Burn's metric}
There is yet another way to represent the Burn's metric, in which its K\"ahler nature is rather implicit. However, the upside of it is that we can obtain a (real) coordinate representation of it. Hence, we can use a symbolic computation program, such as Mathematica, to more easily compute all the quantities that we need. Moreover, this representation is instructive when we compare it with local representations of the metric \eqref{eq:lametrica2}.

Let $\{\sigma_1,\sigma_2,\sigma_3\}$ be the coframe of $SU(2)$ left-invariant 1-forms, normalized such that $\de\sigma_1+2\sigma_2\wedge\sigma_3=0$ (and cyclically). Since $SU(2)\simeq \mathbb{S}^3$, these forms also serve as an orthonormal coframe on the three-sphere. Then, the Burns metric can be given the following coordinate representation:
\begin{equation}\label{eq:burnscoord}
    g_B=dr^2+r^2\left(\sigma_1^2+\sigma_2^2+\sigma_3^2\right)+\sigma_1^2+\sigma_2^2
\end{equation}
As a comment, note that in this representation the usual Euclidean metric is given by $g=dr^2+r^2(\sigma_1^2+\sigma_2^2+\sigma_3^2)$. As we have previously discussed, the metric \eqref{eq:burnscoord} describes the blow-up of $\mathbb{C}^2$ at the origin. 

In order to do any kind of calculation with the Burns metric, we need to introduce an explicit coordinate representation of the left-invariant $SU(2)$ one-forms. We do so in terms of the Euler angles, which we denote by $\{\theta,\varphi,\psi \}$. The one-forms then take the following explicit form:
\begin{equation}\label{eq:su2euler}
\begin{cases}
2\sigma_1&=\sin(\psi)d\theta-\cos(\psi)\sin(\theta)d\varphi \\
2\sigma_1&=\cos(\psi)d\theta+\sin(\psi)\sin(\theta)d\varphi \\
2\sigma_3&=d\psi+\cos(\theta)d\varphi
\end{cases}
\end{equation}
With these explicit coordinate representations, we are able to compute both the Ricci scalar and the Ricci tensor. As should be expected, we have that $R_B=0$ identically. We will come back to this point in a later section

\subsection{Getting a feel for the flow}
Seeing that the treatment of the metric for $C\mathbb{P}^2\#\stackrel{n)}{\ldots}\#C\mathbb{P}^2$ explicitly and analitically is rather complicated, we turn to approximating the behavior of the manifold under the Ricci flow. To this end, we will use less information than the required to solve the full flow equations. Namely, we turn to a result about the metric \eqref{eq:lametrica2} which is stated in \cite{nCP2s}. While the original metric \eqref{eq:lametrica} was constructed as a scalar flat metric, the actual representative for the metric on $n\mathbb{CP}^2$ has a different choice of conformal factor, and thus the scalar curvature need not vanish. In fact, it does not, and it turns out to be
\begin{equation}\label{eq:Rncp2}
    R=\frac{12}{V}.
\end{equation}
Where $V$ is given by \eqref{eq:defV}. Since $\rho_j$ is the distance from a point in $\mathcal{H}^3$ to one of the monopole locations, it only vanishes at these singular loci. As a consequence of this, at these loci, $V\to \infty$. Conversely, the scalar curvature vanishes at (and only at) these singular points. 

In order to use this information to study the behavior of the Ricci flow on this manifold, we need an additional result, coming from the general theory of the flow equations. In particular, as a consequence of the maximum principle, we have that 
\begin{theorem}
Let $g(t)$ be a Ricci flow on a closed $d$-dimensional manifold $X$, for $t\in[0,T]$. If $R\geqslant \alpha \in \mathbb{R}$ at time $t=0$, then for all times $t\in [0,T]$,
\begin{equation}
    R\geqslant\frac{\alpha}{1-\left(\frac{2\alpha}{d}\right)t}
\end{equation}
\end{theorem}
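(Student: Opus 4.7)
The plan is to reduce the statement to an ODE comparison via the parabolic maximum principle, using the well-known evolution equation for the scalar curvature under Ricci flow. The excerpt already recalls that
\begin{equation}
    \partial_s R = \nabla^2 R + 2 R_{\mu\nu} R^{\mu\nu},
\end{equation}
so the first step is simply to cite this. The second step is to bound the Ricci-squared term from below in terms of $R$. Since $g^{\mu\nu}g^{\rho\sigma}R_{\mu\rho}R_{\nu\sigma}$ is the squared norm of the Ricci tensor and $R = g^{\mu\nu}R_{\mu\nu}$ is its trace, the Cauchy-Schwarz inequality (in the form $(\mathrm{tr}\,A)^2 \leqslant d \cdot \mathrm{tr}(A^2)$ for a symmetric $d\times d$ matrix) yields $R_{\mu\nu}R^{\mu\nu} \geqslant R^2/d$. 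This gives the differential inequality
\begin{equation}
    \partial_s R \geqslant \nabla^2 R + \frac{2}{d}R^2.
\end{equation}

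The third step is to introduce the ODE associated to the right-hand side, dropping the Laplacian. Define $\phi(s)$ by
\begin{equation}
    \phi'(s) = \frac{2}{d}\phi(s)^2, \qquad \phi(0) = \alpha,
\end{equation}
which is separable and integrates explicitly to $\phi(s) = \alpha/(1 - (2\alpha/d)s)$, exactly the function in the statement. The idea is then to show that $R(s) \geqslant \phi(s)$ pointwise on $X$ for all $s\in[0,T]$.

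The fourth and central step is to invoke the parabolic maximum (equivalently, minimum) principle on the closed manifold $X$. Consider the difference $u(x,s) \coloneqq R(x,s) - \phi(s)$. At $s=0$ we have $u \geqslant 0$ by hypothesis. Suppose for contradiction that $u$ becomes negative, and let $s_0$ be the first time a point $x_0\in X$ achieves $u(x_0,s_0) = 0$ with $u$ decreasing there. Compactness of $X$ ensures such a first-contact point exists. At $(x_0,s_0)$ one has $\nabla^2 R \geqslant 0$ (spatial minimum of $R$, since $\phi$ is spatially constant) and $\partial_s u \leqslant 0$, while the differential inequality above combined with $R(x_0,s_0)=\phi(s_0)$ gives $\partial_s R \geqslant (2/d)\phi(s_0)^2 = \phi'(s_0)$, so $\partial_s u \geqslant 0$ at that point, contradicting the strict decrease. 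This yields $R \geqslant \phi$ on $[0,T]$ as claimed.

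The step I expect to be most delicate is the application of the maximum principle itself: one has to justify the first-contact argument carefully (smoothness of $R$ in $s$, the fact that $X$ closed rules out escape of the minimum to infinity, and a standard $\varepsilon$-perturbation trick $\tilde{\phi}_\varepsilon = \phi - \varepsilon$ to convert the non-strict inequality into a strict one so that the contradiction is clean). Everything else is routine: the evolution equation is assumed, the Cauchy-Schwarz bound is algebra, and the ODE integration is elementary. Note also that the formula is only meaningful while $1 - (2\alpha/d)s > 0$, i.e.\ for $s < d/(2\alpha)$ when $\alpha>0$; for $\alpha \leqslant 0$ the bound is valid for all $s\in[0,T]$, which is consistent with the statement.
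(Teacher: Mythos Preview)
Your argument is correct and is exactly the standard proof of this bound: evolution equation for $R$, the trace inequality $R_{\mu\nu}R^{\mu\nu}\geqslant R^{2}/d$, and then the scalar maximum principle comparing $R$ to the ODE solution $\phi(s)=\alpha/(1-(2\alpha/d)s)$. The paper does not actually supply its own proof of this theorem; it merely invokes it as a known consequence of the maximum principle (with an implicit pointer to \cite{Topping2006LecturesOT}), so there is nothing to compare beyond noting that you have filled in precisely the argument the paper alludes to.
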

What is  important about this result is that, whenever the lower bound is strictly positive, the scalar curvature blows up at some finite flow time. Going back to \eqref{eq:Rncp2}, and recalling the explicit formula for $V$, we see that $R$ is precisely a non-negative function over $n\mathbb{CP}^2$. Moreover, the only points at which $R$ tends to 0 are precisely the singular loci which define the location of the magnetic monopoles. This is enough information to know that, at every point but the singularities, there is a lower bound on the scalar curvature which goes to infinity at finite flow time. Hence, we can guarantee that the curvature blows up and the manifold shrinks infinitely (but at finite time) at all points but at the singular loci. 

 \subsubsection{The behavior of the flow at the monopole locations}
 
 A failure of the above argument is that the scalar curvature \eqref{eq:Rncp2} vanishes at the location of the monopoles. To be more precise, $R$ is not technically defined at the $\{p_i\}$. However, we have already said that it can be smoothly extended to the whole manifold by essentially attaching points at the monopole locations. Then, after a series of approximations of the metric around said locations, one finds that it can indeed be smoothly extended to them. Because of the smoothness of this extension, it follows that the Ricci scalar also extends smoothly, and hence by continuity it must vanish at the monopole locations, since 
 \begin{equation}\label{eq:Rloci}
     \lim_{p\to p_i}\frac{1}{V}=0.
 \end{equation}
This, of course, renders the argument from the previous section invalid, and hence we cannot use it to justify that the manifold shrinks at every point. One could, in principle, follow the approximations which lead to the extension of the metric, and try to see if anything can be done to set up a flow around the singularities. This is what we will attempt now.

As we have said, the full manifold is obtained by an extension process which, at the monopole locations in particular, begins by attaching points at the missing $\{p_i\}$. This can be done in such a way that the metric that we had previously built extends smoothly to the the full manifold. To this end, we begin by introducing exponential polar coordinates on $\mathbb{H}^3$ near one of these points. The metric then becomes
\begin{equation}
    g=q^2[V(d\rho^2+\sinh^2(\rho) g_{\mathbb{S}^2})+V^{-1}\omega^2]
\end{equation}
where $V=\frac{1}{2\rho}+F$ (cf. \eqref{eq:cfV}), for $F$ some smooth function on a neighbourhood of the origin, and $g_{\mathbb{S}^2}$ denotes the standard metric on the unit 2-sphere. In terms of the $SU(2)$ left-invariant one-forms that we introduced earlier on, this can be written as $g_{\mathbb{S}^2}=\sigma^2_1+\sigma_2^2$. For small values fo the radial coordinate $\rho$, we can use the polar coordinates on the base to identify it with $\mathbb{R}^+\times \mathbb{S}^2$, and the circle bundle with $\mathbb{R}^+\times \mathbb{S}^3$. In turn, the bundle map $\pi:X\to \mathbb{H}^3\setminus\{p_i\}$ is identified with 
\begin{align*}
    p: \mathbb{R}^+\times \mathbb{S}^3&\to\mathbb{R}^+\times \mathbb{S}^2 \\
    (r, x)\hspace{0.2cm}&\mapsto \left(\frac{r^2}{2}, \mu(x)\right).
\end{align*}
Here, the map $r\mapsto r^2/2$ is chosen as such for consistency with the previously constructed bundle map, $pi:X\to \mathbb{H}^3\setminus\{p_i\}$, and $\mu:\mathbb{S}^3\to \mathbb{S}^2$ is the usual Hopf fibration. Because of this identification with unit spheres, it is now useful to fully introduce the left-invariant $SU(2)$ coframe $\{\sigma_1, \sigma_2, \sigma_3\}$ on $\mathbb{S}^3$, again normalized so that $\de\sigma_1+2\sigma_2\wedge\sigma_3=0$. Because of this relation, and the fact that the curvature of the connection form is defined over the base space, we have the (gauge) freedom of choosing $\omega=\sigma_3+p^*\theta$, with $\theta$ some one-form defined on the base space. With all these choices, and the change of variable $\rho\to r^2/2$, the metric turns out to be 
\begin{equation}\label{eq:oofmetric}
    g=q^2\left[(1+r^2F)dr^2+r^2(1+\sinh^2\left(r^2/2\right)r^2F)(\sigma_1^2+\sigma^2_2)+r^2(1+r^2F)^{-1}(\sigma_3+p^*\theta)^2\right]
\end{equation}
Identifying $\mathbb{R}^+\times\mathbb{S}^3$ with $\mathbb{R}^4\setminus\{0\}$, the map $p$ extends to 
\begin{align*}
    p:\mathbb{C}^2&\to\mathbb{R}^3 \\
    (z_1,z_2)&\mapsto \left(\frac{|z_1|^2-|z_2|^2}{2}, \text{Re}(z_1\overline{z}_2), \text{Im}(z_1\overline{z_2})\right)
\end{align*}
In particular, this identification allows us to identify the metric at the origin as precisely the Euclidean metric, expressed as $g=dr^2+r^2(\sigma_1^2+\sigma_2^2+\sigma_3^2)$. In any case, the fact that the original metric agrees with the Euclidean one at the origin in no way allows us to conclude that it is locally flat. As a final remark, note that one still needs to implement the appropriate conformal factor $\text{sech}^2(\rho)$ in front of the metric for it to be the actual metric for $n\mathbb{CP}^2$.

After this whole discussion, one might be tempted to try to set up the Ricci flow equations for \eqref{eq:oofmetric}, or an approximation at the origin thereof. This is rather complicated. Most importantly, while we could explicitly compute the function $F$, the one-form $\theta$ is arbitrary. This prevents us from giving an explicit form of the metric, and thus from making explicit computations with it. Because in the context of \cite{nCP2s}, the explicit form of $\theta$ is irrelevant, no further properties are given for it. If one were to drastically simplify the metric \eqref{eq:oofmetric}, and assume both that $F$ is constant, and that $p^*\theta=0$, then the metric that we would be dealing with would be very similar to the Euclidean metric, or the Burns metric, with different prefactors, which have simple dependences on the coordinate $r$. With this very drastic oversimplification, we obtain that $R=2+8F$, which, by virtue of $F$ being positive, is itself positive. This tells us that this approximation is not terribly useful, since it contradicts the argument given around \eqref{eq:Rloci}, and thus renders the approximation invalid. We could try to relax the conditions, and take $p^*\theta$ to have constant coefficients in the $\{dr, \sigma_1,\sigma_2\}$ basis for the base space. This indeed allows us to force $R$ to vanish, depending on the value of the coefficients. However, we are still approximating $F$ for a constant, when it should be a function of the coordinates in the base. Hence, the approximation seems too drastic to be meaningful.

\subsection{Addendum: Why the behavior at the singularities may not matter}

In the previous sections, we have tried to argue that the Ricci flow equations make the whole $n\mathbb{CP}^2$ manifold, endowed with the metric \eqref{eq:lametrica2} shrink. While we were able to check that this does indeed happen almost everywhere, we found some trouble on the set of points $\{p_i\}\subset n\mathbb{CP}^2$, where the monopoles are located, and the original space was blown up. In this section, we argue that the behaviour of the manifold at these discrete points might not be all that important.

What we know for sure is that the Ricci flow makes $n\mathbb{CP}^2\setminus\{p_i\}$ shrink indefinitely until, at the limit where the flow ends, it shrinks to a point. Notice that we have used the word \textit{limit} here, since turning an $n$-dimensional manifold to a point is a procedure that does great violence to almost any conceivable topological property, and thus, cannot be a part of a smooth deformation. However, the interpretation still stands. If we take the manifold to shrink to a point, then taking so much care about the exact behavior at the monopole locations might be pointless, since we are already heavily modifying the manifold. 

Another argument that can support this claim is the following. Recall that, in order to make sense of the idea of a manifold converging along the flow to some other manifold, we introduced in section \ref{sect:necks} the notion of smooth, pointed convergence. In this context, we could pick precisely the set $\{p_i\}$ of points along the flow. Indeed, since the curvature increases much faster away from the monopole locations than around them, we expect necks to form, and to then pinch off. In this sense, the Ricci flow would perform a disconnected sum decomposition of $n\mathbb{CP}^2$, in such a way that it would leave a point $p_i$ in each of the remaining components. Then, under the notion of pointed convergence, we  conclude that the manifold would converge to a set of points, which are precisely the monopole locations.

\section{Distances along the flow}
The Swampland distance conjecture \cite{Ooguri:2006in,Palti:2019pca} substantially asserts that large distances in the effective field theory moduli space should be accompanied by asymptotically massless towers of states, when quantum gravity is taken into account. This intuition was further extended to the case of Ricci flow in \cite{Kehagias:2019akr}, in which metric components themselves were regarded as moduli of the theory. Exploiting the technical discussion presented \cite{GilMedrano1991THERM}, the distance along a family of metrics $g\left(t\right)$ on a Riemannian manifold $\mathcal{M}$, where $t$ is a real flow parameter, can be computed as
\begin{equation}
    \Delta\left(t_1,t_2\right)=c\int_{t_1}^{t_2}\left(\frac{1}{\mathcal{V}_{\mathcal{M}}}\int_{\mathcal{M}}\sqrt{g}g^{\mu\nu}g^{\alpha\beta}\frac{\de g_{\mu\alpha}}{\de t}\frac{\de g_{\nu\beta}}{\de t}\right)^{\frac{1}{2}}\de t\ ,
\end{equation}
where $c$ is a real constant and $\mathcal{V}_{\mathcal{M}}$ is the volume of $\mathcal{M}$. When the family is taken to follow Ricci flow, the above reduces to:
\begin{equation}\label{eq:riccidistance}
    \Delta\left(t_1,t_2\right)=2c\int_{t_1}^{t_2}\left(\frac{1}{\mathcal{V}_{\mathcal{M}}}\int_{\mathcal{M}}\sqrt{g} R_{\mu\nu}R^{\mu\nu}\right)^{\frac{1}{2}}\de t\ .
\end{equation}
For a $D$-dimensional Einstein manifold, we get:
\begin{equation}
    \Delta\left(t_1,t_2\right)\sim\log{\frac{R\left(t_1\right)}{R\left(t_2\right)}}\ .
\end{equation}
Therefore, it is clear that both flow singularities and flat spacetime fixed points lie at infinite distance and should be accompanied by a tower of asymptotically massless states. For a more general manifold, we can decompose the Ricci scalar as
\begin{equation}
    R_{\mu\nu}=\frac{R}{D}g_{\mu\nu}+T_{\mu\nu}\ ,
\end{equation}
where $T_{\mu\nu}$ is its traceless part. Since
\begin{equation}
    R_{\mu\nu}R^{\mu\nu}=\frac{R^2}{D}+T_{\mu\nu}T^{\mu\nu}
\end{equation}
we obtain
\begin{equation}
    \Delta\left(t_1,t_2\right)=2c\int_{t_1}^{t_2}\biggl(D^{-1}\braket{R^2}+\braket{T_{\mu\nu}T^{\mu\nu}}\biggr)^{\frac{1}{2}}\de t\ ,
\end{equation}
in which $\braket{A}$ referes to the average value of $A$ on $\mathcal{M}$. Let's assume that the flow encounters a singularity at $t=t_{s}$, so that $R\to\infty$ for $t\to t_s$. In the broad set of situations in which it implies $\braket{R^2}\to\infty$, we can expect that flow singularities sit at an infinite distance in the moduli space. This is clearly the case for a positive-curvature manifold shrinking to a point, as a $\mathbb{CP}^2$ term in a disjoint union. It would be interesting to assess this in a more general setting and investigate what singularity resolutions imply from the perspective of the distance along the flow. The fact that they connect cobordant manifolds gives hope that one could somehow assign a meaningful moduli space length to the deformation connecting the manifold which is about to pinch and the resolved one, even though a topology change is involved. We expect that rephrasing \eqref{eq:riccidistance} in a more general form, in which the deformation parametrised by $t$ can generally connect cobordant manifolds, could shed some light on the issue at hand. The $t$-integral, in that sense, would have to be regarded as an integral over the extra coordinate of the manifold of which the initial and the resolved one are boundaries. In particular, one would like to assess whether singularity resolutions can avoid distance divergencies. This would be consistent with the general expectation that quantum gravity processes can naturally induce topology changes, implying that space-time manifolds with different topological properties should belong to the same moduli space. Further discussions will be required, but such an investigation goes beyond the scope of the current work.

\section{The ADE singularities}
One of the most striking features of $K3$ surfaces, which is also very relevant for physics applications, is that all of their singularities are classified in terms of the ADE Lie groups. However, when one resolves these singularities by blowing them up, all the information regarding the ADE classification is lost. From the connected sum perspective, the manifold $K3\#\mathbb{CP}^2$, for example, contains no information about the singularity which was replaced by a smooth manifold. In this section, we will try to elucidate how to recover the ADE information from the blow-up of the singularities. The idea is that the cobordism conjecture states that $K3$ is a valid background only when it is accompanied by a series of blow-ups of its singularities. This is sometimes solved by adding a series of defects on the $K3$, sourcing fluxes. Then, it could be that we can reconstruct, or find some restrictions on, the gauge groups associated to said fluxes.

\subsection{A lightning fast overview of complex geometry}
In this section we review/introduce some concepts in algebraic and complex geometry which are needed to shape the discussion. They are not, however, necessary to understand the general idea. We begin with the concept of a holomorphic line bundle.

\subsubsection{Holomorphic line bundles and the Picard group.}

In general terms, a line bundle is simply  a vector bundle or rank $1$, so the fibers at any point are lines. Note the difference on the nature of the bundle when its dimension is complex rather than real. We now give the definition of a holomorphic vector bundle
\begin{mydef}
Let $X$ be a complex manifold. A holomorphic vector bundle of rank $r$ on $X$ is a complex manifold $E$ together with a holomorphic projection map $\pi:E\to X$, and such that each fibre $E(x)\coloneqq \pi^{-1}(x)$ is an $r$-dimensional complex vector space, which further satisfies the following condition: There exists an open covering $X=\bigcup_{i}U_i$ and biholomorphic\footnote{A biholomorphic function is a function $f:U\to \mathbb{C}^n$ which is both holomorphic and injective.} maps $\psi_i:\pi^{-1}(U_i)\stackrel{\sim}{\longrightarrow} U_i\times \mathbb{C}^r$ commuting with the projections from the bundle to $U_i$ such that the induced map $\pi^{-1}(x)\stackrel{\sim}{\longrightarrow}\mathbb{C}^r$ is complex linear.
\end{mydef}
The above definition os basically that of a regular vector bundle, but we further require that the so-called \textit{local trivializations} (the $\psi_i$ above) are holomorphic. Now, we can just define a \textit{holomorphic line bundle} as a holomorphic vector bundle of rank $r=1$. One last comment on this definition is that the induced transition functions on the vector bundle 
\begin{equation}
    \psi_{ij}(x)\coloneqq\left(\psi_i\circ\psi_{j}^{-1}\right)(x, \ ):\mathbb{C}^r\to\mathbb{C}^r
\end{equation}
are complex linear for all $x\in U_i\cap U_j$. In the same way as for differentiable real or complex bundles, a holomorphic rank $r$ vector bundle is determined by the set of holomorphic cocycles $\{U_i,\psi_{ij}:U_i\to GL(r,\mathbb{C})\}$. This will be important in the following, when we introduce the Picard and the Néron-Severi (NS) groups.

We now turn to a very specific (and unique) type of line bundle. Over an $n$-dimensional complex projective space $\mathbb{CP}^n$, there is (up to isomorphism) only one holomorphic line bundle. It is called the \textit{tautological line bundle}, and denoted $\mathcal{O}(1)$. Its dual is defined as follows:
\begin{prop}
The set $\mathcal{O}(-1)\subseteq \mathbb{CP}^n\times \mathbb{C}^{n+1}$, consisting of all pairs $(l, z)\in \mathbb{CP}^n\times \mathbb{C}^{n+1}$ such that $z\in l$ forms in a natural way a holomorphic line bundle over $\mathbb{CP}^n$.
\end{prop}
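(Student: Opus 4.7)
The plan is to verify the three things that the definition of a holomorphic line bundle demands: that $\mathcal{O}(-1)$ carries the structure of a complex manifold, that the restriction of the first projection $\pi\colon\mathbb{CP}^n\times\mathbb{C}^{n+1}\to\mathbb{CP}^n$ is holomorphic with one-dimensional linear fibres, and that local trivialisations exist whose transition functions are $\mathbb{C}$-linear on each fibre. I would do all three at once by working over the standard affine cover.

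First, I cover $\mathbb{CP}^n$ by the usual open sets $U_i=\{[z_0:\cdots:z_n]\mid z_i\neq 0\}$, with holomorphic coordinates $w_j^{(i)}=z_j/z_i$ for $j\neq i$. For each $l\in U_i$ there is a canonical representative $\tilde v_i(l)=(w_0^{(i)},\ldots,1,\ldots,w_n^{(i)})\in l$ with a $1$ in the $i$-th slot, so an arbitrary vector $z\in l$ has the form $\lambda\,\tilde v_i(l)$ for a unique $\lambda\in\mathbb{C}$, namely $\lambda=z_i$. This suggests defining
\begin{equation}
\psi_i\colon\pi^{-1}(U_i)\longrightarrow U_i\times\mathbb{C},\qquad (l,z)\longmapsto(l,z_i),
\end{equation}
with inverse $(l,\lambda)\mapsto(l,\lambda\,\tilde v_i(l))$. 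Because $\tilde v_i$ depends holomorphically on the chart coordinates $w_j^{(i)}$, the inverse is holomorphic as a map into $\mathbb{CP}^n\times\mathbb{C}^{n+1}$; conversely $\psi_i$ is the restriction of a holomorphic projection. I would then declare $\psi_i$ to be a biholomorphism in order to transport the complex manifold structure of $U_i\times\mathbb{C}$ to $\pi^{-1}(U_i)$, so that $\mathcal{O}(-1)$ acquires the structure of an $(n+1)$-dimensional complex manifold and $\pi$ becomes a holomorphic submersion.

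Next I compute the transition functions. On $U_i\cap U_j$ both charts are available, and the composition reads
\begin{equation}
\psi_i\circ\psi_j^{-1}\colon(U_i\cap U_j)\times\mathbb{C}\longrightarrow(U_i\cap U_j)\times\mathbb{C},\qquad(l,\lambda)\longmapsto\bigl(l,\lambda\cdot(z_i/z_j)\bigr),
\end{equation}
because the $i$-th coordinate of $\lambda\,\tilde v_j(l)$ equals $\lambda\,(z_i/z_j)$. The scalar factor $z_i/z_j$ is a nowhere-vanishing holomorphic function on $U_i\cap U_j$, and its action on the fibre coordinate is multiplication by a complex number, which is trivially $\mathbb{C}$-linear. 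This simultaneously checks that the complex structures induced by the $\psi_i$ glue into a single complex manifold structure on $\mathcal{O}(-1)$ and that the $\psi_i$ are local trivialisations of a holomorphic line bundle in the precise sense required.

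The only real obstacle is a bookkeeping one: making sure that the fibrewise vector space structure coming from the inclusion $l\hookrightarrow\mathbb{C}^{n+1}$ is the same as the one induced by each $\psi_i$. That check is immediate once one notes that $\psi_i$ restricted to $\pi^{-1}(l)=l$ is the coordinate functional $z\mapsto z_i$, which is a $\mathbb{C}$-linear isomorphism $l\to\mathbb{C}$; the transition relation then just rescales this isomorphism by $z_i/z_j$. Altogether the $(U_i,\psi_i)$ exhibit $\mathcal{O}(-1)$ as a holomorphic line bundle over $\mathbb{CP}^n$, with the projection, the fibre structure, and the transition functions all arising canonically from the defining incidence condition $z\in l$.
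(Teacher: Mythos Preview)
Your argument is correct and is the standard verification: cover $\mathbb{CP}^n$ by the affine opens $U_i$, trivialise via the $i$-th coordinate, and observe that the transition functions $z_i/z_j$ are holomorphic and $\mathbb{C}$-linear. The paper in fact omits the proof entirely, remarking only that ``the basic idea is to just take an element in $\mathbb{CP}^n$, which is a complex line, and define the fibre over that point as the line itself''; your write-up is precisely the natural fleshing-out of that sentence.
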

The proof of this proposition is not terribly complicated, and hence we will skip it. The basic idea is to just take an element in $\mathbb{CP}^n$, which is a complex line, and define the fibre over that point as the line itself. Rigorously, it can be shown that the fibre $\mathcal{O}(-1)\to \mathbb{CP}^n$ over $l \in \mathbb{CP}^n$ is naturally isomorphic to $l\subset \mathbb{C}^{n+1}$. The above construction gives rise to the following definition
\begin{mydef}
The line bundle $\mathcal{O}(1)$ is defined as the dual of $\mathcal{O}(-1)$, namely $\mathcal{O}(-1)^*$. This can be extended. For $k\in \mathbb{Z}^+$, let $\mathcal{O}(k)$ be the line bundle $\mathbb{O}(1)\otimes\stackrel{k)}{\ldots}\otimes \mathcal{O}(1)$. One defines $\mathcal{O}(-k)$ in an analogous way. It is standard notation to write $E(k)\coloneqq E\otimes \mathcal{O}(k)$ for any vector bundle $E\to \mathbb{CP}^n$.
\end{mydef}
Finally, if one defines $\mathcal{O}(0)\coloneqq \mathbb{CP}^n\times \mathbb{C}$ (i.e. the trivial line bundle), the tensor product and the dual endow the set of all isomorphism classes of holomorphic line bundles over a complex manifold $X$ with the structure of an Abelian group. Being that this groups is infinite cyclic, it must be isomorphic to $\mathbb{Z}$. This is the Picard group of $X$, $\text{Pic}(X)$.

There is a consequence of the above definition which will help us build up to our end goal. It is the fact that $\text{Pic}(X)\simeq H^{1}(X, \mathcal{O}_X^*)$. Here, $\mathcal{O}_X^*$ denotes the subsheaf of nonvanishing holomorphic functions on $X$. The precise definition of a sheaf is not necessary here. To understand the general argument it is enough to thing about it in terms of homology with coefficients which are nonvanishing holomorphic functions. The proof for the previous statement, which we skip here, relies on two facts. The first one is that line bundles are described in terms of cocycles, given by the transition functions. These are in fact not any type of cocycles, but fall under the classification of Čech cocycles. The association is clear with the definition of the Čech cohomology groups. We will skip this definition for reasons that will be clear in a moment. Because of this definition in terms of cocycles, one has that $\text{Pic}(X)\simeq \check{H}^{1}(X, \mathcal{O}_X^*)$, where $\check{H}$ denotes \v{C}ech cohomology. The second fact that this proof relies on, is that for $p=1$, the homomorphism $\check{H}^p(X,\mathcal{F})\to H^{p}(X, \mathcal{F})$ is an isomorphism. Therefore, in the end, we have that $\text{Pic}(X)\simeq H^{1}(X, \mathcal{O}_X^*)$.

\subsubsection{The Néron-Severi group.}
We will now introduce one of the last concepts which are needed to start discussing the blowup of singularities on $K3$ surfaces. It builds up from the previous section in the sense that it is a construction which is defined in terms of the Picard group of a complex manifold. This is the so-called Néron-Severi group.

We beign by the fact that there is a short exact sequence of sheaves, called the \textit{exponential} sequence (for reasons that will be clear in a moment). It is the following
\begin{equation}\label{eq:expseq}
    0\to \mathbb{Z}\to\mathcal{O}_X\to\mathcal{O}_X^*\to 0.
\end{equation}
Again, $\mathcal{O}_X$ denotes the sheaf of of holomorphic functions on $X$, and $\mathcal{O}_X^*$ the subsheaf of nonvanishing holomorphic functions on $X$. We may check for exactness as follows. The first map, $\mathbb{Z}\to \mathcal{O}_X$ is an inclusion, so it is trivially injective. The map in the middle is the one that gives the name to the sequence. It is given by 
\begin{align*}
    \exp:\mathcal{O}_X&\to \mathcal{O}_X^* \\
    f &\mapsto e^{2\pi i f}.
\end{align*}
Thus, it is clear that $\text{im}(\mathbb{Z}\to\mathcal{O}_X)=\ker(\mathcal{O}_X\to\mathcal{O}_X^*)$. Finally, surjectivity of $\mathcal{O}_X\to\mathcal{O}_X^*$ is granted by the (local) existence of the complex logarithm.

The exponential sequence gives rise in the usual sense to a long exact sequence in cohomology. We look in particular at
\begin{equation}
    \ldots \to H^1(X,\mathbb{Z})\to H^1(X,\mathcal{O}_X)\to H^1(X,\mathcal{O}^*_X)\to H^2(X,\mathbb{Z})\to \ldots
\end{equation}
Inserting the Picard group explicitly, we have 
\begin{equation}\label{eq:exactseq}
    \ldots \to H^1(X,\mathbb{Z})\to H^1(X,\mathcal{O}_X)\to \text{Pic}(x)\to H^2(X,\mathbb{Z})\to \ldots
\end{equation}
The last map above is known, it is precisely the first Chern class $c_1:\text{Pic}(X)\to H^2(X,\mathbb{Z})$. A map which is related to this will give us the definition of the Néron-Severi group.

The canonical boundary map in the long exact sequence above, arising from the exponential map, $\text{Pic}(X)\to H^2(X,\mathbb{Z})$ can be composed with the map $H^2(X,\mathbb{Z})\to H^2(X,\mathbb{C})$, which is simply induced by the inclusion $\mathbb{Z}\hookrightarrow\mathbb{C}$. On the other hand, a standard result of Hodge theory is that, for compact K\"ahler manifolds, one has the following decomposition
\begin{equation}
    H^n(X,\mathbb{C})=\bigoplus_{p+q=n}H^{p,q}(X)
\end{equation}
In particular, for $n=2$,
\begin{equation}\label{eq:hodgedec}
    H^2(X,\mathbb{C})=H^{2,0}(X)\oplus H^{1,1}(X)\oplus H^{0,2}(X)
\end{equation}
Before going further, note that due to complex conjugation on $H^n(X,\mathbb{C})$, we have that $\overline{H^{p,q}}(X)=H^{q,p}(X)$. This, combined with the fact that $\mathcal{O}_X$ is the sheaf of \textit{holomorphic} functions on the manifold, yields $H^2(X,\mathcal{O}_X)=H^{0,2}(X)$. 

With this information, we know that the long exact sequence given by the exponential sequence \eqref{eq:expseq} shows that the composition
\begin{equation}
    \text{Pic}(X)\to H^{2}(X,\mathbb{Z})\to H^2(X,\mathbb{C})\to H^2(X,\mathcal{O}_X)\left(=H^{0,2}(X)\right)
\end{equation}
is trivial, precisely because it fits into an exact sequence. Moreover, we have that
\begin{equation*}
    \text{Im}\left(\text{Pic}(X)\to H^2(X,\mathbb{C})\right)\subseteq \text{Im}\left(H^2(X,\mathbb{Z})\to H^2(X,\mathbb{C})\right)
\end{equation*}
We also have that $\text{Im}\left(\text{Pic}(X)\to H^2(X,\mathbb{C})\right)\subseteq \ker(H^2(X,\mathbb{C})\to H^{0,2}(X))$, where the last map is the usual projection given by the Hodge decomposition. Consider now $H^2(X,\mathbb{R})\subseteq H^2(X,\mathbb{C})$. It is invariant under complex conjugation, and further contains $\text{Im}(H^2(X,\mathbb{Z})\to H^2(X,\mathbb{C}))$, which must therefore also be invariant under complex conjugation, and hence fall into $H^{1,1}(X)$ in the Hodge decomposition. As a conclusion of all this, one finds that the image $\text{Pic}(X)\to H^2(X, \mathbb{C})$ is contained in the following homology group
\begin{equation}
    H^{1,1}(X,\mathbb{Z})\coloneqq \left(\text{Im}(H^2(X,\mathbb{Z})\to H^2(X,\mathbb{C}))\right)\cap H^{1,1}(X).
\end{equation}
The above definition is sometimes abbreviated as $ H^{1,1}(X,\mathbb{Z})\coloneqq H^2(X,\mathbb{Z})\cap H^{1,1}(X)$. It is a standard result (Lefschetz theorem on $(1,1)$ classes), that for $X$ a compact K\"ahler manifold, the map $\text{Pic}(X)\to H^{1,1}(X,\mathbb{Z})$ is surjective.

To conclude this section, we look at a slightly different map from the Picard group of a complex manifold. The image of the map $\text{Pic}(X)\to H^2(X,\mathbb{R})$ is called the Néron-Severi group of the manifold $X$, denoted $NS(X)$. It spans a finite dimensional real vector space $NS(X)\subset H^2(X,\mathbb{R})\cap H^{1,1}(X)$. Because of the surjectivity of $\text{Pic}(X)\to H^{1,1}(X,\mathbb{Z})$, it turns out that the natural inclusion $NS(X)\subseteq H^{1,1}(X,\mathbb{Z})$ is actually an equality. Finally, the rank of $NS(X)$ is denoted by $\rho(X)$ and called the \textit{Picard number}. 

In the following, we will not work with general complex manifolds, but rather with very specific and well behaved ones. In particular, we will deal with Calabi-Yau manifolds, for which we have that $H^1(X,\mathcal{O}_X)=0$. This can in fact be read off from the Hodge diamond of any Calabi-Yau $n$-fold. Because of the exact sequence \eqref{eq:exactseq}, this means that $c_1:\text{Pic}(X)\to H^2(X,Z)$ is injective. This in turn implies that 
\begin{equation*}
    \text{NS}(X)=\text{Pic}(X)
\end{equation*}
Hence, once we introduce blow-ups from the perspective of algebraic geometry and line bundles, it will follow that said blow-ups are described by the Néron-Severi group of the manifold.

\subsection{Blow-ups from algebraic geometry}
In order to make a connection with the rest of the work, we will now introduce the blowups of a $K3$ manifold in a perspective more akin to algebraic geometry, and using the notions which we have introduced in the previous chapter. Before going any further, we have to introduce a bit of nomenclature.

The first definition is that of a complex analytic hypersurface
\begin{mydef}
An analytic hypersurface of an analytic variety $X$ is an analytic subvariety $Y\subset X$ of codimension one, namely $\dim(Y)=\dim(X)-1$.
\end{mydef}
Above, an analytic variety is a sort of generalization of complex manifolds, which allows for singularities on the space. More concretely, a complex analytic variety is locally given as the zero locus of a certain (finite) set of holomorphic functions. In other words, it is the union of a certain set of components, which are zero loci of holomorphic functions. In the same fashion, a hypersurface inherits this property from its ambient variety. Hence, every hypersurface $Y$ is the union $\cup_iY_i$ of its so-called \textit{irreducible components}. If the ambient manifold $X$ is compact, this union is finite, but such in general we only have that the union is locally finite\footnote{A locally finite collection of subsets of a topological space (in this case $\{Y_i\}$ as subsets of $X$) is one in which every point $y\in \cup_iY_i$ has a neighbourhood which intersects at most finitely many elements of the collection }. This very property is what allows us to (pointwise) define a formal linear combination of hypersurfaces as follows 
\begin{mydef}
Given $\{Y_i\}_{i\in I}$ a collection of irreducible hypersurfaces of $X$, a divisor $D$ on $X$ is a formal linear combination
\begin{equation}
    D=\sum_{i\in I}a_i[Y_i]
\end{equation}
with $a_i\in \mathbb{Z}$.
\end{mydef}
Note that local finiteness implies that for any $x\in X$ there exists an open neighbourhood of $x$ on which $a_i\neq 0$ and $Y_i\cap U\neq \emptyset$ for only finitely many $i\in I$. In other words, the sum can be defined because it is pointwise finite. 

Given the previous sections, one can see the importance of introducing hypersurfaces and divisors. We would like to understand complex manifolds with tools as simple as possible. One such tool is the idea of line bundles, which as we have seen can be studied by means of their cohomology. Now, if we consider sections of said line bundles, it turns out that hypersurfaces are always given as some global holomorphic zero section of a line bundle. This makes sense, since the fibers of line bundles are by definition one dimensional, and so their complement has codimension one. In fact, we state without proof that there exists a natural group homomorphism between the group of divisors of a complex manifodl $X$, $\text{Div}(X)$, and $\text{Pic}(X)$.

\subsubsection{Blow-ups and line bundles}
Previously, we have seen how to construct the blow-up of $\mathbb{C}^2$ at the origin, by considering tangent lines at the origin. Now, we realize that in order to do so, we had to define the blow-up of our original space in a very similar way to how we later on defined holomorphic line bundles, in the sense that in both cases we had subsets of $\mathbb{CP}^n\times \mathbb{C}^{n+1}$. However, in one case we had a blow-up of $\mathbb{C}^n$ at some point, and in the other a line bundle over $\mathbb{CP}^n$. We will now see how both are related. Again, blow-ups can be performed along complex subvarieties, but for the purposes of this work, we will only consider blow-ups along points.

For a line bundle $\mathcal{O}(-1)\to \mathbb{CP}^n$ we can write down the following diagram of maps
\begin{center}
\begin{tikzcd}
\mathbb{CP}^n \arrow[r, hookrightarrow]\arrow[d]& \mathcal{O}(-1)\arrow[r,hookrightarrow] \arrow[d, "\sigma"] & \mathbb{CP}^n\times\mathbb{C}^{n+1} \arrow[r]\arrow[d] & \mathbb{CP}^n \\
\{0\} \arrow[r, hookrightarrow] & \mathbb{C}^{n+1} \arrow[r, equal] &\mathbb{C}^{n+1} & 
\end{tikzcd}
\end{center}
The line at the top comprises the usual line bundle map, where the fibre over a $l\in \mathbb{CP}^n$ is precisely the line itself. However, instead of working with $\mathbb{CP}^n$ as the base, we can just as well use $\sigma:\mathcal{O}(-1)\to \mathbb{C}^{n+1}$, while keeping the bundle construction from before exactly the same. To make eveerythign match, we define the fibres of $\sigma$ as follows. For $z\neq 0$, $\sigma^{-1}$ is the unique line $l_z\in \mathbb{CP}^n$ that passes through $z\in \mathbb{C}^{n+1}$. For $z=0$, this preimage must be $\sigma^{-1}(0)=\mathbb{CP}^n$, since all lines contain the origin. As a matter of fact, this preimage is the zero section of $\mathcal{O}(1)\to \mathbb{CP}^n$, as it represents $l\mapsto (l,0)$, where we are using the notation from our introduction of line bundles. This allows us to define a blow-up as follows:
\begin{mydef}
The blow-up $\sigma:Bl_0(\mathbb{C}^{n+1})\to \mathbb{C}^{n+1}$ of $\mathbb{C}^{n+1}$ at the origin is the holomorphic line bundle $\mathcal{O}(-1)\to \mathbb{CP}^n$ together with the natural projection $\sigma:\mathcal{O}(-1)\to \mathbb{C}^{n+1}$
\end{mydef}

For the sake of completeness, we will now give the general definition of a blow-up along an arbitrary submanifold.
\begin{mydef}\label{def:blowup}
Let $Y$ be a complex submanifold of $X$. Then there exists a complex manifold $\hat{X}=Bl_Y(X)$, the blow-up of $X$ along $Y$, together with a holomorphic map $\sigma:\hat{X}\to X$ such that $\sigma: \hat{X}\setminus [\sigma^{-1}(Y)]\simeq X\setminus Y$, and $\sigma:\sigma^{-1}(Y)\to Y$ is isomorphic to $\mathbb{P}(\mathcal{N}_{Y\setminus X})\to Y$.
\end{mydef}
Above, $\mathcal{N}_X$ denotes the normal bundle of $X$. The two conditions above may be interpreted as follows. Away from the submanifold along which the blow-up is performed, the procedure does nothing. On the contrary, this submanifold is replaced by (a bundle map from) the projectivization of the normal bundle of that submanifold. If we again consider the particular case of $Y=\{x_0\}$, and use the fact that $\mathbb{P}(\mathbb{C}^n)=\mathbb{CP}^{n-1}$, we recover the definition from above.
\begin{mydef}\label{def:ediv}
The hypersurface $\sigma^{-1}=\mathbb{P}(\mathcal{N}_{Y\setminus X})\subseteq Bl_{Y}(X)$ is called the \textbf{exceptional divisor} of the blow-up $\sigma:Bl_Y(X)\to X$
\end{mydef}
In the particular case that we are dealing with, in which $Y$ is just a point in an $n$ dimensional manifold $x\in X$, then the blow-up replaces $x$ by $\mathbb{CP}^{n-1}$. The exceptional divisor $\sigma^{-1}(x)$ is usually denoted $E$.

\subsection{The ADE classification of $K3$ singularities}
In this short section, we briefly review how singularities are classified on a $K3$ surface. In contrast with other sections, we will not go into much detail, as a lot of these concepts are known from $F$-theory. A standard reference for this is \cite{tasifth}.

In the context of $F$-theory, one considers spacetime compactifications of type IIB string theory of the following form 
\begin{equation}
    \mathcal{M}^{1,9}=\mathbb{R}^{1,9-2n}\times B_n
\end{equation}
with $B_n$ a manifold of (complex) dimension $n$. One further considers 7-branes which fill the $\mathbb{R}^{1,9-2n}$ factor, and that wrap an internal holomorphic cycle $\Sigma_{n-1}\subset B_n$ of codimension one. The 7-branes imply the appeareance of the axion $C_0$ in the theory which, together with the dilaton, combine to give a complex field, the axio-dilaton 
\begin{equation}
    \tau=C_0+ie^{-\varphi}.
\end{equation}
Because of the nontrivial variation of the axio-dilaton around the $D7$-branes, and the fact that $R_{ab}=\nabla_a\nabla_b\varphi\neq 0$, we know that $B_n$ cannot be Calabi-Yau. However, it turns out that this variation of $\tau$ can be used to define an elliptic fibration over $B_n$. An elliptic fibration is one in which each fibre has the structure of a torus with a marked base point (an elliptic curve)
\begin{center}
\begin{tikzcd}
\pi:\mathbb{E}_{\tau} \arrow[r] & Y_{n+1}\arrow[d] \\
 & B_n
\end{tikzcd}
\end{center}
Here, the fact that each fiber $\pi^{-1}(b)$ for $b\in B_{n}$ has a marked point is represented in the requirement that there exists a section $\sigma:B_n\to Y_{n+1}$. The image of the section at each point in the base is the marking of the fiber. It turns out, by calculations regarding the first Chern classes of both $Y_{n+1}$ and $B_n$, that $c_1(Y_{n+1})=0$, and hence $Y_{n+1}$ is Calabi-Yau. In the case of $n=1$, we have that $Y_2=K3$, and $B_{1}=\mathbb{CP}^1$. Note that the $K3$ here is generic. For the remainder of the work, we will stay in this context.

We will now see how singularities may occur on an elliptic fibration, and how they are classified. We will try to be as simple as possible here. Most often, elliptic fibrations are described by a so-called Weierstrass model. The construction is as follows. First, one considers a special type of projective space of dimension 3, namely $\mathbb{P}_{231}$, defined as 
\begin{equation}
    \mathbb{P}_{231}\coloneqq\frac{\mathbb{C}^3\setminus\{0\}}{\sim}, \ \ \ \ \ \ \ \text{where } (x,y,z)\sim (\lambda^2 x, \lambda^3 y, \lambda z) \ \ \text{for }\lambda\in \mathbb{C}^*
\end{equation}
Then, the elliptic curve is defined pointwise as the zero-locus on $\mathbb{P}_{231}$ of 
\begin{equation}\label{eq:Pell}
    P=y^2-(x^3+fxz^4+gz^6)
\end{equation}
Namely, the elliptic curve is defined as the hypersurface $P=0$. Above, $f$ and $g$ are technically understood as sections of a line bundle over the base $B_n$, namely
\begin{align*}
    f&\in \Gamma(B_n, \mathcal{L}^4) \\
    g&\in \Gamma(B_n,\mathcal{L}^6)
\end{align*}
Where powers of a line bundle are understood as coming form tensor products with itself, as in the definition of $\mathcal{O}(k)$. This technicality arises because both $f$ and $g$ are related to the structure constant $\tau$ of the marked torus which defines the elliptic curve. They thus have certain transformation properties under $SL(2,\mathbb{Z})$, the isometry group of the torus, and this is reflected in their definition as sections. In more practical terms, $f$ and $g$ can be thought of as coefficients at each point in the base, which define a particular elliptic curve at that point through \eqref{eq:Pell}. We would also like to stress the fact that this strange looking polynomial is highly non-unique, and depending on certain mathematical details (such as the characteristic of the field over which the curve is defined), it can be brought to simpler forms. However, we would like to stick to the notation used in \cite{tasifth}, as it is a very popular reference.

With the Weierstrass model in mind, we will now see what it means for a singularity to occur, and then we will discuss the sense in which they are classified. We say that a hypersurface $P=0$ is singular if its gradient also vanishes. That is, if we both have that $P=0$ and $dP=0$. Let us see what this means for the Weirstrass model. First, we check that the singularity cannot happen at $z=0$, since then $P=0$ would reduce to $y^2=x^3$, whose gradient is $2y=3x^2$. Therefore, the only possible singularity would be at $x=y=z=0$, which is a point that is not on $\mathbb{P}_{231}$. By the rescaling properties of this projective space, we can thus restrict ourselves to the case where $z=1$, where the Weierstrass equation $P=0$ can be written as 
\begin{equation}
    y^2=x^3+fx+g.
\end{equation}
The left hand side is just a degree 3 polynomial, which in these conditions can be decomposed as 
\begin{equation}
    y^2=K\prod_{i=1}^3(x-x_i),
\end{equation}
where $\{x_i\}$ are the roots. A simple calculation of the gradient as above reveals that a singularity occurs if and only if a root has multiplicity greater than 1. On the other hand, since we are dealing with a degree 3 polynomial, this only happens when its discriminant $\Delta$ vanishes. For this polynomial, the explicit form of the determinant is 
\begin{equation}
    \Delta=4f^3+27g^2.
\end{equation}
There is a bit more to the singularities of the Weierstrass model. In the general case, only $\Delta$ vanishes, and $f$ and $g$ take whichever nonzero values needed for this to happen. However, it can happen that the zero of the discriminant is also a zero of both $f$ and $g$ as functions of the base coordinate. In this case, the singularity is enhanced, and the Weierstrass model becomes itself singular (whereas before, even if $\Delta=0$ at some points, the whole elliptic fibration $Y_{n+1}$ is still smooth).

\subsubsection{The ADE classification}\label{sec:ADEclass}

In principle, an elliptically fibered $K3$ may present many types of singularities, depending on how $\Delta$, $f$, and $g$ vanish. Fortunately, there is a way of classifying these singularities, first introduced in two landmark works by Kodaira \cite{kodaira} and Néron \cite{Neron1964}. We will now briefly discuss this classification, in order to make our latter arguments clearer.

First, we should mention that the classification in question is over smooth, minimal surfaces. The reason for smoothness, is that any singular surface can be made smooth by means of so-called \textit{birational maps}, of which blow-ups (as in definition \ref{def:blowup}) are a particular example. This means that we can "smooth out surfaces in a non-intrusive way". Thus, once smooth surfaces are classified, the non-smooth ones are too. The second condition is minimality. This is somewhat related to the last condition. We say that a smooth surface is minimal whenever it is not birationally equivalent (there is a birational map between them) to another smooth surface. In practical terms, it means that a minimal smooth surface is one which cannot be obtained by blowing up \textbf{smooth points} of some other smooth surface.

Once this technicality is understood, we proceed with the classification of $K3$ singularities. Let $X$ denote some elliptically fibered $K3$ surface, with a certain number of singular points $\{p_1,\ldots, p_n\}$, which we take to be distinct. In order to obtain the resolution of $X$, which we denote by $\overline{X}$, we need to perform $m$ blow-ups, where $n\leqslant m$\footnote{It might be that a certain singular point has a higher order singularity and needs to be blown up more than once}. This, as we have said before, produces a set of $m$ total exceptional divisors $\{E_1,\ldots, E_m\}$. Given the dimensionality, these $E_i$ are nothing but rational curves, isomorphic to $\mathbb{CP}^1$. These are sometimes referred to as \textit{degenerate fibers}, as over the hypersurface defined by $\{\Delta=0\}$, they are degenerate.

Each of the curves that are produced in the blow-up can not only intersect itself, but it can also intersect one or more of the other curves. When we use the term \textit{intersect}, we mean in the usual way in which cycles intersect in homology theory. This is captured in the intersection product, which assigns to any two curves $C_i$ and $C_j$ an integer, $C_i\cdot C_j\in \mathbb{Z}$. We will come back to the intersection structure of a $K3$ in a later section. The punchline, and the basis for the classification of ADE singularities, is the profoundly shocking fact that the intersection structure of these rational curves exactly mimics one of the extended Dynkin diagrams of an ADE Lie algebra. Recall that the Dynkin diagram associated to a Lie algebra is constructed from its simple roots $\{\alpha_i\}$, with $i=1,\ldots, \text{rk}(\mathfrak{g})$, by a pairing $d_{ij}=\langle\alpha_i,\alpha_j\rangle\langle \alpha_j,\alpha_i\rangle$. Because of technical details, which are discussed in \cite{tasifth}, this leads us to identify $E_i\sim (-\alpha_i)$, for $i=1,\ldots, \text{rk}(\mathfrak{g})$.

The above relation can be thought of in a very pictoric and intuitive way. If we use the fact that $\mathbb{CP}^1\simeq \mathbb{S}^2$, the Riemann sphere, a blow-up (or successive blow-ups) replaces singular points with Riemann spheres. The intersection of rational curves from above translates into the intersection of these Riemann spheres at a certain (finite) number of points. Because of the identification of these spheres with the (negative of) the simple roots of some ADE algebra, the spheres arrange themselves literally like in a Dynkin diagram.

There are certainly some subtleties about this classification, and some singularities whose rational curves do not straightforwardly reproduce an extended Dynkin diagram, but can nevertheless still be classified by an ADE group. For our purposes, and the extension of this work, it is enough to have an intuitive idea of how this classification arises, and the mathematics that are relevant to it. We redirect the interested reader to \cite{miranda1989basic, EllipticSchuettShioda, tasifth}, the latter one being the standard refence for $F$-theory in physics. We also note that a full table with the classification of the ADE singulatrities of an elliptically fibered $K3$ can be found on \cite[Table 4.1]{tasifth}.

\section{Recovering the ADE singularities from the blow-ups}
Now that we have introduced the necessary mathematics to properly discuss the topic, and we have a general idea of how singularities occur and are classified on an elliptically fibered $K3$, it is time to go backwards. In the end, our goal is to relate all this to the cobordism conjecture, and try to draw some restrictions on the allowed backgrounds for our theories of quantum gravity. The problem here is that cobordisms are, in a sense, "only sensitive to the topology of a manifold", and not even completely at that. By this we mean that the bordism class of a manifold can be affected by such things as characteristic numbers (the Pontrjagin and Stiefel-Whitney numbers), tangential structures, such as in \cite{andriot2022looking}, etc. However, the main appeal of bordisms is that they may relate two manifolds whose topologies differ completely. One explicit example of this is given in appendix \ref{app:2cp2}, where we check that, while $\mathbb{CP}^2\#\overline{\mathbb{CP}}^2$ and $\mathbb{CP}^2\sqcup\overline{\mathbb{CP}}^2$ are bordant, their homology groups differ.

Throughout the last sections, we have seen that line bundles could be understood through the Picard group of a manifold, and then, through its N{\'e}ron-Severi group. In particular, this leads us to the conclusion that blow-ups, defined in terms of holomorphic line bundles, may be described in the same terms. Since the ADE classification of $K3$ singularities is based on how the exceptional divisors coming from the blow-ups of the singularities intersect, it follows that some aspects of the ADE classification of a $K3$ have to do with the cohomology structure of the manifold. This is really a problem, as these are liable to change under bordisms. In particular, a bordism relation (or the bordism class of a manifold) tells us nothing about its homology structute. It follows that, in order to fully specify the ADE structure, or the exact allowed gauge groups for backgrounds which are allowed by the cobordism conjecture, one needs to add some extra information to what is given by the cobordism conjecture. In particular, from the discussion in the last sections, we would need to specify the way in which the exceptional divisors coming from the blow-ups of the singularities intersect. In this section, we will see what kind of restrictions on the allowed gauge groups are obtainable from this perspective.

\subsection{Setting up the stage}
In previous introductory sections, we have seen how blow-ups can be understood in terms of holomorphic line bundles, and how these can be in turn understood from the N{\'e}ron-Severi group of the manifold $X$ which has been blown up.  Moreover, we have stated that $\text{NS}(X)=H^{1,1}(X,\mathbb{Z})=H^2(X,\mathbb{Z})\cap H^{1,1}(X)$. We can draw two conclusions from here: the first is that $\text{NS}(X)$ will inherit a lattice structure from $H^2(X,\mathbb{Z})$, and the second is that $\rho(X)=\text{rk}(\text{NS}(X))\leqslant 20$. The second conclusion comes from knowing that the dimensionality of the cohomology groups of a $K3$ surface is completely fixed, and encapsulated in the Hodge diamond
\begin{equation}
\begin{matrix}
    &  & 1  &  &    \\
    & 0  &   & 0  &    \\
   1 &  & 20 &  & 1    \\
   & 0 &  & 0 &    \\
   &  & 1  &  &     \\
\end{matrix}
\end{equation}
In particular, we have that $\text{rk}(H^2(X,\mathbb{Z}))=h^{2,0}+h^{1,1}+h^{0,2}=22$, whereas $\text{rk}(H^{1,1}(X))\coloneqq h^{1,1}=20$. The reason why $h^{1,1}$ only provides an upper bound on $\rho(X)$ is that one is allowed to consider complex structure deformations on a $K3$. These change the complex structure of the manifold, which in turn modifies the decomposition \eqref{eq:hodgedec}. Thus, elements of $\text{NS}(X)$ could in principle pick up $(2,0)$ or $(0,2)$ components, and hence not belong to $\text{NS}(X)$ anymore. Thus, the best that we can say about the rank of the Néron-Severi group is that $\rho(X)\leqslant 20$.

Now we come to discussing lattice structures, and how $\text{NS}(X)$ inherits one. This is a consequence of the fact that it is a subgroup of $H^2(X,\mathbb{Z})$, which can itself be given the structure of an \textit{even, self-dual} lattice. In general, we recall that a lattice is essentially a module\footnote{For present purposes, one could substitute "module" by "vector space"} over $\mathbb{Z}$ equipped with a non-degenerate symmetric bilinear form
\begin{equation}
    \langle \cdot, \cdot \rangle: L\times L\to \mathbb{Z}.
\end{equation}
We now proceed to explain how this comes about. Because of Poincaré duality, and the dimension of the manifolds we are working with, this can be seen either form the point of view of homology or cohomology. From the above discussion, we know that $H^2(X,\mathbb{Z})\simeq \mathbb{Z}^{22}$ as a group, so as a module it will have finite rank. Because we are working with the so-called \textit{middle cohomology}\footnote{Namely, we have a manifold of dimension $d=4k$, and the (co)homology groups taken into account are $H^{2k}$.}, the cup product of forms (which in the case of de Rham cohomology is the wedge product) endows $H^2(X, \mathbb{Z})$ with a symmetric, bilinear form 
\begin{equation}\label{eq:intersecc}
\begin{split}
    \langle\cdot,\cdot\rangle:H^2(X,\mathbb{Z})\times H^2(X,\mathbb{Z})&\to H^4(X,\mathbb{Z})\simeq \mathbb{Z} \\
    (\alpha,\beta)&\mapsto \langle\alpha,\beta\rangle =\int_X\alpha\wedge \beta
\end{split}
\end{equation}
where $H^4(X,\mathbb{Z})\simeq \mathbb{Z}$ follows form Poincaré duality. The above defined pairing is usually referred to as \textit{the intersection pairing}. In the dual homology picture, this pairing is precisely the one given by the intersection number, where the pairing between two cycles $a,b\in H_2(X,\mathbb{Z})$ is given by
\begin{equation*}
   a\cdot b=\#(a\cap b),
\end{equation*}
where the symbol $\#$ denotes the cardinality of the set. With any of these two dual pairings, we can endow $H^2(X\mathbb{Z})$ with the structure of a lattice, which we shall denote by $\Lambda_{K3}$. We have now come full circle. The starting point of this work is precisely that the Hirzebruch signature theorem states that the signature of this lattice is $\sigma(K3)=-16$. Since its rank is 22, the signature of the lattice must decompose as $(3,19)$.

The remaining properties of $\Lambda_{K3}$ can be seen as follows. From Poincaré duality, we know that for any basis $\{e_i\}$ of 2-forms, we can find a dual basis $\{e^*_j\}$ such that 
\begin{equation*}
    e_i\cdot e^*_j=\delta_{ij}
\end{equation*}
Again, by duality, we were able to define a basis for $H^{2}(X,\mathbb{Z})$ from $\{e_i\}$. Thus, $\Lambda_{K3}$ is self-dual. Lastly, $\Lambda_{K3}$ is even, meaning that for any $\alpha\in H^2(X,\mathbb{Z})$, we have that  
\begin{equation}
    \langle\alpha,\alpha\rangle\in 2\mathbb{Z}.
\end{equation}
In the dual picture, for $a\in H_2(X,\mathbb{Z})$, we have that 
\begin{equation}\label{eq:evenlat}
    a^2\coloneqq a\cdot a\in 2\mathbb{Z}.
\end{equation}
Rather surprisingly, this comes from the fact that $K3$ is spin, and thus $c_1(K3)^2=c_2(K3)=0$ (mod 2). This is just a restatement of the fact that the first and second Stiefel-Whitney classes $w_1$ and $W_2$ vanish. The proof of this is complicated, and relies on Wu's formula.

The requirement that $\Lambda_{K3}$ has to be even and self-dual severely restrict the possibilities for the lattice. Combining this with the signature and rank of the lattice, we know from intersection theory that the only (unique up to isometries with respect to the inner product defined above) choice is 
\begin{equation}\label{eq:k3lattice}
    \Lambda_{K3}=(-E_8)^{\oplus 2}\oplus H^{\oplus 3}
\end{equation}
Above, $E_8$ denotes the root lattice associated to the Lie algebra of the same name, and $H$ represents the unique rank 2 hyperbolic lattice. With this explicit form of the $K3$ lattice, its intersection form, which governs the inner product defined above, is box-diagonal. Two of these boxes are $8\times 8$ matrices given by minus the Cartan matrix associated to $E_8$ (hence the minus sign in \eqref{eq:k3lattice}), and the three remaining boxes each contain the intersection form associated to $H$, which is the following $2\times 2$ matrix\footnote{Note that we are denoting both the lattice and the intersection form by the same name}:
\begin{equation}\label{eq:HypLattice}
    H=\begin{pmatrix}
     0 & 1 \\
     1 & 0
    \end{pmatrix}
\end{equation}

After having briefly described $\Lambda_{K3}$, the lattice defined by $H^{2}(X,\mathbb{Z})$, we move on to seeing how $\text{NS}(X)\subset H^{2}(X,\mathbb{Z})$ inherits a lattice structure. As we have stated before, the possibility of deforming the lattice structure of the surface means that we can only give an upper bound on the rank of this lattice, namely $\rho(X)\leqslant 20$. Because of the natural inclusion, the symmetric bilinear pairing on $\text{NS}(X)$ is exactly the same as the one defined on $H^{2}(X,\mathbb{Z})$ by the intersection product, but just restricted to $\text{NS}(X)$. The only thing left that we can determine about the Néron-Severi lattice is the signature of its intersection form. The answer to this is given by the Hodge index theorem \cite[Corollary 3.3.16]{huybrechts}, which we repeat here
\begin{prop}
Let $X$ be a compact K\"ahler surface. Then the intersection pairing \eqref{eq:intersecc} has index $(2h^{2,0}+1, h^{1,1}-1)$. Restricted to $H^{1,1}$ it is of index $(1,h^{1,1}-1)$
\end{prop}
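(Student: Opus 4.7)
The plan is to exploit the Hodge decomposition $H^{2}(X,\mathbb{C})=H^{2,0}(X)\oplus H^{1,1}(X)\oplus H^{0,2}(X)$ together with the Lefschetz decomposition of $H^{1,1}$ and the Hodge--Riemann bilinear relations on a K\"ahler surface. The idea is to split the real cohomology $H^{2}(X,\mathbb{R})$ into mutually orthogonal subspaces for the intersection pairing $\langle \alpha,\beta\rangle=\int_{X}\alpha\wedge\beta$ and compute the signature on each block separately, adding them up at the end.

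First I would check orthogonality of the decomposition into the real part of $H^{2,0}\oplus H^{0,2}$ and the real part of $H^{1,1}$. On a complex surface any $(p,q)\wedge (p',q')$ vanishes unless $p+p'=q+q'=2$, so $H^{2,0}\wedge H^{1,1}=H^{3,1}=0$ and $H^{2,0}\wedge H^{2,0}=0$, which makes these two pieces intersection-orthogonal. Then I would analyse $H^{2,0}\oplus H^{0,2}$: using complex conjugation $\overline{H^{2,0}}=H^{0,2}$, every real class in this piece has the form $\alpha+\bar{\alpha}$ with $\alpha\in H^{2,0}$. For such a class one computes $(\alpha+\bar{\alpha})\wedge(\alpha+\bar{\alpha})=2\,\alpha\wedge\bar{\alpha}$ since $\alpha\wedge\alpha\in H^{4,0}=0$, and $\int_{X}\alpha\wedge\bar{\alpha}>0$ because $\alpha\wedge\bar{\alpha}$ is a positive $(2,2)$-form. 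A similar positivity holds for the imaginary combinations $i(\alpha-\bar{\alpha})$, yielding a positive definite real subspace of dimension $2h^{2,0}$. Hence this block contributes $(2h^{2,0},0)$ to the signature.

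Next I would handle $H^{1,1}(X,\mathbb{R})$ via the primitive (Lefschetz) decomposition
\begin{equation}
H^{1,1}(X,\mathbb{R})=\mathbb{R}\,\omega\,\oplus\,P^{1,1}(X,\mathbb{R}),
\qquad
P^{1,1}:=\{\alpha\in H^{1,1}:\omega\wedge\alpha=0\},
\end{equation}
where $\omega$ is the K\"ahler form. The two summands are intersection-orthogonal by the definition of $P^{1,1}$. On $\mathbb{R}\,\omega$ the pairing is positive because $\int_{X}\omega\wedge\omega=2\,\mathrm{Vol}(X)>0$. On $P^{1,1}(X,\mathbb{R})$ I invoke the Hodge--Riemann bilinear relation for primitive classes on a compact K\"ahler surface, which for bidegree $(1,1)$ and complex dimension $2$ reads $i^{1-1}(-1)^{1}\int_{X}\alpha\wedge\bar{\alpha}>0$, i.e.\ $\int_{X}\alpha\wedge\alpha<0$ for any nonzero real primitive $(1,1)$-class. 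Therefore the restriction to $H^{1,1}(X,\mathbb{R})$ has index $(1,h^{1,1}-1)$, which is precisely the second assertion of the proposition.

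Summing the two orthogonal contributions I obtain the global signature $(2h^{2,0}+1,\,h^{1,1}-1)$ on $H^{2}(X,\mathbb{R})$, giving the first assertion. The only genuine analytic input is the Hodge--Riemann bilinear relation for primitive $(1,1)$-classes, which I would quote from a standard reference such as Huybrechts' or Griffiths--Harris' treatment; the rest is linear algebra with the Hodge decomposition. The main obstacle, and the point one has to be careful about, is the sign convention in Hodge--Riemann: the factor $i^{p-q}(-1)^{(p+q)(p+q-1)/2}$ must be tracked precisely so that one gets negative (rather than positive) definiteness on $P^{1,1}_{\mathbb{R}}$; an off-by-one sign here would misidentify the signature. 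Once this sign is correct, the splitting into $H^{2,0}\oplus H^{0,2}$, $\mathbb{R}\omega$, and $P^{1,1}_{\mathbb{R}}$ mechanically yields the stated index.
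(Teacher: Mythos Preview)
Your argument is correct and is the standard proof of the Hodge index theorem via the Hodge--Riemann bilinear relations. Note that the paper does not actually prove this proposition: it merely quotes it as \cite[Corollary 3.3.16]{huybrechts} and remarks that the offset of $1$ in the index comes from the K\"ahler class $[\omega]\in H^{1,1}(X)$, which is exactly the mechanism your Lefschetz splitting $H^{1,1}_{\mathbb{R}}=\mathbb{R}\,\omega\oplus P^{1,1}_{\mathbb{R}}$ makes explicit.
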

We will not provide a proof, but just note in passing that the offset of $1$ in both indices has its roots in the appeareance of the K\"ahler class $[\omega]\in H^{1,1}(X)$. As a consequence of this, given that $\text{NS}(X)=H^{2}(X,\mathbb{Z})\cap H^{1,1}(X)$, we have that the signature of the Néron-Severi lattice is $(1,\rho-1)$

\subsection{Introducing the blow-ups}

The only information that the cobordism conjecture gives us, working in $\Omega_4^{SO}$, is that for a generic $K3$ to be a valid background, it must be accompanied by a certain number of copies of $\mathbb{CP}^2$, which we interpret (through passing from disjoint unions to connected sums under the bordism relation) as coming from a number of blow-ups performed on ADE singularities on the original $K3$. We note in passing that a connected sum  $\mathbb{CP}^2\#\stackrel{n)}{\ldots}\#\mathbb{CP}^2$ is interpreted as coming from resolution of $m$ points $p_1,\ldots,p_m$ in the manifold, for $m\leqslant n$. While we know exactly how many blow-ups are needed, we will keep the number arbitrary for the time being. Now, we need to see how these blow-ups might relate to the ADE structure of the original $K3$. In the following, we will name the original $K3$ (with singularities) $X$, and the resolved $K3$ will be called $\hat{X}$.

As we have seen before, blow-ups are understood in terms of line bundles. For each blow-up, we produce an exceptional divisor $E_i$ on the blown-up manifold $\hat{X}$, as in definition \ref{def:ediv}. Each of these exceptional divisors is given by a $\mathbb{CP}^1$, and so they are rational curves. Thus, if we blow up $n$ points, we will end up with $n$ exceptional divisors $E_1,\ldots,E_n$, which should fit in the Néron-Severi lattice of the blow-up surface, $NS(\hat{X})$. This, in particular, forces them to satisfy \eqref{eq:evenlat}. The fact that they do is guaranteed by the so-called \textit{adjunction formula}. This formula deals with the canonical bundle of a manifold and that of a hypersurface which is embedded in said manifold. In particular, for a curve $C$ of genus $g$ embedded on a surface $Y$ with canonical divisor\footnote{The canonical divisor is the divisor associated to the canonical (line) bundle of a manifold.} $K$, it states that 
\begin{equation}
    C\cdot\left(C+K\right)=2g-2=-\chi(C)
\end{equation}
However, since we are dealing with a Calabi-Yau, the canonical bundle is by definition trivial, so its canonical divisor is $K=0$, and thus we have that the self-intersection of a curve is given by its genus $g$. Since $\mathbb{CP}^1\simeq \mathbb{S}^2$, $g=0$, and hence
\begin{equation}
    E_i\cdot E_i=-2
\end{equation}
so all of these exceptional divisors fit nicely into $\text{NS}(\hat{X})\subset\Lambda_{K3}$.

In summary, we have seen that all of the exceptional divisors are algebraic curves of self-intersection $-2$, which fit into the Néron-Severi latice of $\hat{X}$. Therefore, they themselves span an even sublattice $M\coloneqq \text{Span}_{\mathbb{Z}}\{E_1,\ldots,E_n\}\subset \text{NS}(X)$. The ADE classification that we introduced in section \ref{sec:ADEclass}, which had to do with how these exceptional divisors intersect, is now reflected in the fact that $M$ must be isometric to a direct sum of lattices of the same ADE type as the original singularities (i.e. given by the Cartan matrices of said algebras). Since isometry type of a lattice is determined by its intersection form, this is the same as requiring the singular fibers to intersect in a way that reproduces the extended Dynkin diagram of the ADE algebras. All in all, the lattice $M$ spanned by the exceptional divisors induces a decomposition of the Néron-Severi lattice as $\text{NS}(X)=M\oplus M^{\perp}$, whose rank is $n\leqslant \text{rk}(M^{\perp})\leqslant 20$.

There is however a problem with this line of thought, and it is the fact that \textit{we have no information about the ADE type of the singularities}. Said in another way, we do not know how the exceptional divisors intersect, or the isometry type of $M$. The cobordism conjecture says nothing about this, but only about the number of exceptional divisors that are present. Thus, it follows that we would need to add information by hand in order to distinguish the type of ADE algebras to which the singularities correspond. The way of doing this might be through \textit{lattice polarizations}\footnote{D.M. would like to thank Michael Schultz for discussion on these notions}.

\subsection{Lattice polarizations}
The answer to endowing the sub-lattice $M$ with the information necessary to recover the ADE singularities might come from \textit{lattice polarizations}. The idea underlying this is to specify the way in which the Néron-Severi lattice embeds into $\Lambda_{K3}$. In turn, this specifies the intersection structure of the exceptional divisors, and hence the ADE singularities. 
\begin{mydef}
Let $L$ be an even, non degenerate lattice of signature $(1,t)$. An $L$-polarized $K3$ surface is a $K3$ surface $Y$ together with a primitive lattice embedding $i:L\hookrightarrow \text{NS}(X)$ such that $i(L)$ contains a pseudo-ample class.
\end{mydef}
\begin{mydef}
A sublattice $L\subseteq M$ is said to be \textit{primitive} if the quotient $M/L$ is a torsion-free Abelian group. A lattice embedding $i:L\hookrightarrow M$ is said to be \textit{primitive} if $i(L)$ is a primitive sublattice of $M$. Presently, this implies that the short exact sequence of Abelian groups $0\to L \to M \to M/L\to 0$ always splits, and hence there is an isomorphism of Abelian groups 
\begin{equation}
    M\simeq L\oplus (M/L)
\end{equation}
\end{mydef}
As  we have seen in the previous section, the exceptional divisors span an even sublattice $M\subset \text{NS}(X)$. This then leads to a decomposition of the Néron-Severi lattice by considering the complement of $M$, namely $\text{NS}(X)=M\oplus M^{\perp}$. Thus, we could use the tool of lattice polarizations, and take $L$ to be (isometric to) a direct sum of lattices of ADE type. In this way, $L$ would play the role of $M$ in the decomposition of $\text{NS}(X)$. Hence, the lattice polarization would specify the intersection structure of the exceptional divisors, and in turn would characterize the ADE type of the singularities on the $K3$ surface. This whole procedure allows us to complement the cobordism conjecture with the information that it does not provide. We will come back to this particular case in the next section.

Now, we need a way to find out which ADE lattices can polarize a $K3$ surface, and if, further, they can be realized as coming from the singularities of an elliptically fibered $K3$. This in fact not an easy question, since we have the requirement that $i(L)$ must contain a pseudo-ample class (which we have in fact not defined). One of the first conditions that we can anticipate for our polarizing lattice is that, due to the requirement that its signature must be that of a so-called \textit{hyperbolic} lattice, namely $(1,t)$. Hence, purely positive-definite or negative-definite lattices are off the table, as those would have signatures $(n,0)$ and $(0,n)$, respectively. One way to include them would be to consider their direct sum with $H$, the unique hyperbolic rank 2 lattice which we introduced a few paragraphs above. Namely, we would look for lattices $L$ which decomposed as $L=H\oplus M$, with $M$ some positive or negative definite lattice. In fact, it turns out that a lattice polarization is equivalent to an elliptic fibration whenever the polarizing lattice $L$ admits a decomposition $L=H\oplus M$ (or is isometric to one such decomposition).

One publication that might shed some light with this issue is  \cite[Table 1, pg. 1434]{NikulinPolarization}. In particular, the table that we have cited contains all of the hyperbolic, even, 2-elementary lattices which admit a primitive embedding into $\Lambda_{K3}$. A 2-elementary lattice $L$ is one in which its discriminant group, $L^*/L$ satisfies
\begin{equation}
    L^*/L\simeq (\mathbb{Z}/2\mathbb{Z})^a
\end{equation}
Here, $L^*$ denotes the dual lattice, and we have used the fact that due to the symmetric bilinear product there is an inclusion $L\subseteq L^*$. In the publication we have cited, these kind of lattices are characterized by three integers, namely $(\text{rk}(L), a(L), \delta_L)$. The first is the rank of the lattice, the second one is related to the discriminant group as above\footnote{Note that in the publication this parameter is denoted by $l(A_S)$, with $S$ being the lattice in question.}, and $\delta_L$, defined as follows. For an even, 2-elementary lattice $L$, consider its dual, $L^*$, and consider the pairing on $L^*$ defined by the extension of the pairing on $L$ by means of the inclusion $L\subseteq L^*$. Then 
\begin{equation*}
    \delta_L=\begin{cases}
    0 \ \text{if } \langle \alpha, \alpha\rangle\in\mathbb{Z} \ \forall \ \alpha\in L^* \\
    1 \ \text{otherwise}
    \end{cases}
\end{equation*}
With the above defined data, one is able to determine which lattices can and cannot be embedded into $\Lambda_{K3}$. This is done for example in the original reference \cite{NikulinPolarization}, but also in the more recent \cite{JacobianK3}. We will work off of the latter reference from now on, as it contains a broader set of data.

\subsection{Restrictions on the allowed ADE singularities from lattice polarizations and the cobordism conjecture.}
We will now try to put everything together. Firstly, as we have said, the cobordism conjecture states that in order for a $K3$ surface to be a valid background for a theory of quantum gravity, it must come accompanied by a connected sum of 16 copies of $\mathbb{CP}^2$. Namely, the $K3$ surface that we must consider is $\hat{X}=X\# \mathbb{CP}^2\#\stackrel{16)}{\ldots}\#\mathbb{CP}^2$, with $X$ being our original $K3$ surface. We used the fact that a connected sum of some manifold $X$ with copies of $\mathbb{CP}^2$ is diffeomorphic to $X$ blown up a certain number of times to interpret this as the condition that the original $K3$ surface must be blown up exactly 16 times. From our discussion of line bundles and blow-ups, we know that this produces 16 exceptional divisors, which span a lattice $M=\text{Span}_{\mathbb{Z}}\{E_1,\ldots, E_{16}\}$. The intersection structure of these divisors then determines the ADE type of the singularities that were blown-up. However, the cobordism conjecture contains no such information about the intersection structure of the divisors. Hence, we resorted to lattice polarizations in order to try to extract some conditions on the allowed ADE type of the singularities.

As we have said before, from the perspective of lattice polarizations, the presence of a factor of $H$, the unique hyperbolic rank 2 lattice, whose intersection form is given by \eqref{eq:HypLattice} is equivalent to the fact that the $K3$ surface is elliptically fibered. Moreover, as we have preiously stated, the fact that the signature of this lattice is $(1,1)$ allows for polarizing lattices which are not hyperbolic, but rather positive or negative definite, by considering their direct sum with $H$. Thus, we may break the polarizing lattice down as $L=H\oplus M$, where $M$ is the lattice which determines the ADE structure of the singularities. Then, the cobordism conjecture states that $\text{rk}(M)=16$, and therefore the polarizing lattice must have $\text{rk}(L)=18$.

\textit{A priori}, the fact that the rank of the ADE part of the polarizing lattice should have rank 16 should start ringing some bells. Indeed, there are two such lattices which are of great importance for string theory, as they are the only two lattices which are allowed as gauge groups in heterotic string theory. If we further require that the automorphism group of the resulting (resolved) $K3$ surface be finite, we can check in \cite[Table 3]{JacobianK3} that the two possibilities for the polarizing lattice $L=H\oplus K$ are indeed
\begin{equation}
    K=\begin{cases}
    (-E_8)\oplus (-E_8), \\
    -D_{16}.
    \end{cases}
\end{equation}
Note that the minus signs come form the fact that the curves over which the polarized lattice acts must be of self-intersention $-2$. Given the structure of $\Lambda_{K3}$, it was expected that, if the ranks matched, we could embed two copies of $E_8$ into it.
Lastly, we comment on one piece of data present on the table we have referenced. In this publication, an explicit difference is made between $K$, the polarizing lattice itself, and $K^{\text{root}}$, the sub-lattice spanned by the roots of $K$. Namely, $K^{\text{root}}\subseteq K$ is the lattice spanned by the elements of self intersection $-2$. The quotient between the two is denoted by\footnote{We note in passing that this quotient is related to the Mordell-Weil group of the elliptic fibration.} $\mathcal{W}\coloneqq K/K^{\text{root}}$. For the two cases above, we have that for $(-E_8)\oplus (-E_8)$, $\mathcal{W}=1$, and hence both lattices are one and the same. However, for $D_{16}$, $\mathcal{W}=\mathbb{Z}_2$. Since $D_{16}$ is the lattice defined by the Lie algebra corresponding to $SO(32)$, the root lattice is in fact given by $SO(32)/\mathbb{Z}_2$. Thus, the two gauge groups associated to the singularities for a $K3$ surface under the assumptions of the cobordism conjecture (assuming that the automorphism group is finite) are precisely the only ones allowed for the heterotic string.

To close this chapter off, we note that the rank of the polarizing lattice ($\text{rk}(L)=18$) does not exhaust the upper bound on the rank of $\text{NS}(X)$, which is $\rho(X)\leqslant 20$. Hence, there are still some possibilities to "fill out" the remainder of the polarizing lattice. 

\section{Other scenarios of interest}
Staying in the $SO$-structure, we could also consider $\Omega_8^{SO}$, which is generated by $C\mathbb{P}^4$ and $C\mathbb{P}^2\times C\mathbb{P}^2$. This could give us more freedom to play with the two generators to kill the cobordism class.

If we decide to leave the $SO$-structure, we could apply the same logic to more complicated structures, such as $\Omega_d^{\text{Spin}}$, $\Omega_d^{\text{String}}$, etc. As shown in \cite{structcob}, these particular ones mentioned can be obtained by further constraining the vanishing of certain charactersitic numbers. For example, $\Omega_d^{\text{Spin}}$ is obtained from $\Omega^{SO}_d$ by requiring that $w_2=0$, as expected.

\newpage
\section{Conclusions and Outlook}
In our work, we have applied the techniques of the Ricci flow equations with surgery, as introduced by Hamilton and later brought to full fruition by Perelman, and tried to combine them with the techniques of algebraic geometry in order to study possible implications of the cobordism conjecture. After having given a reasonable introduction to the relevant mathematical tools, it was argued -and explicitly shown for the simple cases of Einstein manifolds and Ricci solitons- that Ricci flow preserves Ponryagin and Stiefel-Whitney numbers. This holds even in the presence of a "neckpinch", as the proper treatment of it under the Ricci flow is to perform a particular type of surgery: substituting the pinched neck by two smooth caps. Since these characteristic numbers unambiguously fix an element of $\Omega_4^{SO}$, the above statement implies that Ricci flow preserves the corresponding cobordism class. It is nevertheless interesting and reasonable, from a more physical perspective, to investigate the effects of completely removing one of the two components resulting from the surgery. In general, this can imply the necessity of balancing such a modification out by the introduction of defects, that mimic the contribute of the removed component to the overall cobordism class. Therefore, focusing on the specific string-theory inspired example of a $K3$, it was argued that its cobordism class can be trivialised by adding to it the connected sum of 16 copies of $\mathbb{CP}^2$.

In the later sections of the present work, the trivialization of the $K3$ class in $\Omega^{SO}_4$ is studied from two complementary perspectives. On the one hand, we tried to apply the Ricci flow equations to the resulting manifold, in order to see if we could recover the topological defects (as the shrinking of parts of the manifold to points) by means of the flow. That is, to see if the Ricci flow is the bridge connecting the perspectives of adding connected sums of manifolds and adding topological defects. This was done by means of a metric constructed on $\mathbb{CP}^2\#\stackrel{n)}{\ldots}\#\mathbb{CP}^2$, obtained by Claude Lebrun as a generalization of the so-called Burns metric. We give arguments as to why this manifold shrinks under the Ricci flow, thus producing singularities and infinite distance limits in the moduli space of space-time metrics, of the kind which are discussed in the context of the Swampland distance conjecture. 

On the other hand, after the realization that taking connected sums of a manifold with $\mathbb{CP}^2$ is equivalent to considering a blow-up of said manifold, and that the singularities of $K3$ surfaces are classified by the ADE Lie Algebras, we try to complement the information given by the cobordism conjecture. Namely, we realize that we can recover the singularities by flowing the $\mathbb{CP}^2$s that they produce, but we realize that the cobordism conjecture contains no information whatsoever regarding the ADE nature of these singularities. Hence, this is a piece of data that we must introduce by hand, and might tell us about possible restrictions on the ADE nature of the singularities. After introducing the necessary tools from algebraic geometry, and reviewing how the ADE classification comes about, we propose that this extra information might be added by means of lattice polarizations of $K3$ surfaces. Upon doing so, and given the conditions imposed by the cobordism conjecture, we recover the only two groups which are allowed as gauge groups for the heterotic string. 

While the technique of lattice polarizations is best understood for $K3$ surfaces, this work could be expanded by considering more complicated structures, such as those on $\Omega_d^{\text{spin}}$ or $\Omega_d^{string}$. In this context, one could find which manifolds are needed to trivialize the bordism class, and try to apply the Ricci flow equations to them. This would be easiest when staying in the $SO$-structure, since we know that $\Omega_8^{SO}$ is generated by $\mathbb{CP}^4$ and $\mathbb{CP}^2\times \mathbb{CP}^2$. Not only are these manifolds well-known, but also the fact that we have more than one generator might allow us more freedom to encounter different scenarios for the trivialization of a given bordism class.
\section*{Acknowledgments}
The work of D.L. is supported by the Origins Excellence Cluster and by the German-Israel-Project (DIP) on Holography and the Swampland.

\newpage
\section{Appendix: The (co)homology of $\mathbb{CP}^{2}\#\mathbb{CP}^2$}\label{app:2cp2}
In this somewhat more technical section, we will work out the homology and cohomology groups of $\mathbb{CP}^2\#\mathbb{CP}^2$. We will use a collection of results which are fundamental in the context of algebraic topology, but nonetheless require some degree of familiarity with the topic. Due to this technical nature, we will not introduce these concepts here, but rather point the reader to standard references such as \cite{HatcherAT}.

To begin the discussion, we note that the orientation of an $n$-dimensional manifold does not affect its homological structure, but rather only the choice of generator of the top homology group $H_n$. We refer to this generator as the fundamental class of the manifold. Hence, in particular, we have that $H_p(\mathbb{CP}^2)=H_p(\overline{\mathbb{CP}}^2)$, for $p=1,\ldots, 4$, all other homology groups being automaticall trivial for dimensional reasons. Furthermore, as a consequence of the CW structure of $\mathbb{CP}^2$, we have that 
\begin{equation}
    H_p(\mathbb{CP}^2)=\begin{cases}
    \mathbb{Z} \hspace{1cm} &\text{if } p=0,2,4 \\
    0 & \text{else}.
    \end{cases}
\end{equation}
However, in order to simplify calculations, we will not use the above (singular) homology, but the so-called reduced homology, which we denote by $\Tilde{H}_p$. In this context of path connected spaces, this simply means that we can set $\Tilde{H}_p(\mathbb{CP}^2)=0$, and hence 
\begin{equation}
    \tilde{H}_p(\mathbb{CP}^2)=\begin{cases}
    \mathbb{Z} \hspace{1cm} &\text{if } p=2,4 \\
    0 & \text{else}.
    \end{cases}
\end{equation}
In the following, we will drop the tilde, as we will only be considering reduced homology groups.

The actual computation of the homology groups of $\mathbb{CP}^2\#\overline{\mathbb{CP}}^2$ will be a two-step process fo sorts. First, we will consider the long exact sequence of the pair $(\mathbb{CP}^2\#\overline{\mathbb{CP}}^2, \mathbb{S}^3)$. In said pair, all of the points on the $\mathbb{S}^3$ over which we glue to define the connected sum are identified as one, and thus $(\mathbb{CP}^2\#\overline{\mathbb{CP}}^2, \mathbb{S}^3)\simeq \mathbb{CP}^2\vee\overline{\mathbb{CP}}^2$, where $\vee$ denotes the one-point union of two spaces. Once this is done, we will use a simple Mayer-Vietoris argument that will allow us to directly compute the homology groups of our original space in terms of those of the original $\mathbb{CP}^2$ factors.

The long exact sequence of the pair $(\mathbb{CP}^2\#\overline{\mathbb{CP}}^2, \mathbb{S}^3)$ is given by 
\begin{equation}\label{eq:lespair}
    \ldots\longrightarrow H_p(\mathbb{CP}^2\#\overline{\mathbb{CP}}^2, \mathbb{S}^3)\stackrel{\partial_*}{\longrightarrow}H_p(\mathbb{S}^3)\longrightarrow H_p(\mathbb{CP}^2\#\overline{\mathbb{CP}}^2)\longrightarrow H_p(\mathbb{CP}^2\#\overline{\mathbb{CP}}^2, \mathbb{S}^3)\longrightarrow\ldots
\end{equation}
Because the homology groups of spheres are
\begin{equation}
    H_p(\mathbb{S}^n)=\begin{cases}
    \mathbb{Z} \hspace{1cm} &\text{if } p=n \\
    0 & \text{else}.
    \end{cases}
\end{equation}
Then, for $p\neq 3,4$, we have that 
\begin{equation}
    \ldots \longrightarrow 0\longrightarrow H_p(\mathbb{CP}^2\#\overline{\mathbb{CP}}^2)\stackrel{\simeq}{\longrightarrow}H_p(\mathbb{CP}^2\#\overline{\mathbb{CP}}^2,\mathbb{S}^3)\longrightarrow 0 \longrightarrow\ldots
\end{equation}
Therefore, by exactness of the sequence, the map in the middle is an isomorphism, so $H_p(\mathbb{CP}^2\#\overline{\mathbb{CP}}^2)\simeq H_p(\mathbb{CP}^2\#\overline{\mathbb{CP}}^2, \mathbb{S}^3)\simeq H_p(\mathbb{CP}^2\vee\overline{\mathbb{CP}}^2)$ for $p\neq 3,4$. For the special case $p=3,4$, the interesting piece of the long exact sequence looks like 
\begin{equation}\label{eq:lespairspicy}
\begin{split}
    \ldots\longrightarrow 0 &\longrightarrow H_4(\mathbb{CP}^2\#\overline{\mathbb{CP}}^2)\longrightarrow H_4(\mathbb{CP}^2\#\overline{\mathbb{CP}}^2,\mathbb{S}^3)\longrightarrow\mathbb{Z}\longrightarrow \\
    &\longrightarrow H_3(\mathbb{CP}^2\#\overline{\mathbb{CP}}^2)\longrightarrow H_3(\mathbb{CP}^2\#\overline{\mathbb{CP}}^2,\mathbb{S}^3)\longrightarrow 0\longrightarrow \ldots
    \end{split}
\end{equation}
For the moment, we leave this as is.

Now that we have related the homology groups of $\mathbb{CP}^2\#\overline{\mathbb{CP}}^2$ to those of $\mathbb{CP}^2\bigvee\overline{\mathbb{CP}}^2$, we can calculate the latter by a simple Mayer-Vietoris argument. Let $U,V\subseteq \mathbb{CP}^2\bigvee\overline{\mathbb{CP}}^2$ be two subsets such that 
\begin{equation}
    \mathbb{CP}^2\bigvee\overline{\mathbb{CP}}^2=\text{Int}(U)\cup\text{Int}(V),
\end{equation}
 and that $U\cap V\neq \emptyset$. To ensure that both of these conditions are satisfied, we pick $U$ and $V$ to be neighbourhoods of $\mathbb{CP}^2$ and of $\overline{\mathbb{CP}}^2$ in $\mathbb{CP}^2\vee\overline{\mathbb{CP}}^2$ respectively. Furthermore, we take the neighbourhood small enough so that $U\cap V$ is contractible to a point (said point being the one that defines the one-point union). This will imply that the homology groups of $U\cap V$ will all be trivial.
 
 In the above conditions we have the so-called Mayer-Vietoris long exact sequence:
 \begin{equation}
     \ldots \longrightarrow H_p(U\cap V)\longrightarrow H_p(U)\oplus H_p(V)\longrightarrow H_p(U\cup V)\longrightarrow H_{p-1}(U\cap V)\longrightarrow \ldots
 \end{equation}
 As already stated before, we have that $U\cap V$ deformation retracts to a point, and hence all of the homology groups are trivial (even $H_0(\text{pt})$, since we are using reduced homology). Furthermore, both $U$ and $V$ deformation retract to $\mathbb{CP}^2$ and $\overline{\mathbb{CP}}^2$ respectively, so in total we have that
 \begin{equation}
     \ldots \longrightarrow H_p(\text{pt})\longrightarrow H_p(\mathbb{CP}^2)\oplus H_p(\overline{\mathbb{CP}}^2)\longrightarrow H_p(\mathbb{CP}^2\vee\overline{\mathbb{CP}}^2)\longrightarrow H_{p-1}(\text{pt})\longrightarrow \ldots
 \end{equation}
 However, since all of the (reduced) homology groups of a point vanish, by exactness of the sequence this implies that 
 \begin{equation}
     H_p(\mathbb{CP}^2)\oplus H_p(\overline{\mathbb{CP}}^2)\simeq H_p(\mathbb{CP}^2\vee\overline{\mathbb{CP}}^2)
 \end{equation}
for all $p\in\mathbb{Z}$.

By the above argument, we can just plug $H_p(\mathbb{CP}^2)\oplus H_p(\overline{\mathbb{CP}}^2)$ into \eqref{eq:lespair}. By previous arguments, we have that for $p=0,1,2$ we have that $H_p(\mathbb{CP}^2\#\overline{\mathbb{CP}}^2)\simeq H_p(\mathbb{CP}^2)\oplus H_p(\overline{\mathbb{CP}}^2)$. However, the really interesting case is that of \eqref{eq:lespairspicy}. To study it, we will use the following two facts \cite[Th. 3.26]{HatcherAT}:
\begin{itemize} 
    \item If $\mathcal{M}$ is an orientable, $n$-dimensional space, then 
    \begin{equation}
        H_n(\mathcal{M})=\mathbb{Z}.
    \end{equation}
    The choice of generator is called the fundamental class of $\mathcal{M}$, and determines its orientation.
    
    \item The connected sum of orientable spaces is again orientable.
\end{itemize}

Combining these two results, we have that \eqref{eq:lespairspicy} can be written as
\begin{equation}
    0\longrightarrow \mathbb{Z}\longrightarrow \mathbb{Z}\oplus\mathbb{Z}\stackrel{\varphi}{\longrightarrow}\mathbb{Z}\stackrel{\psi}{\longrightarrow}H_3(\mathbb{CP}^2\#\overline{\mathbb{CP}}^2)\longrightarrow 0
\end{equation}
The key to the above sequence is precisely the map $\varphi$. If we can show that it is surjective, then it follows by exactness that $\text{Im}(\varphi)=\text{Ker}(\psi)$, and hence $\psi$ is the zero map. Furthermore, and also by exactness, $\psi$ is itself surjective, for the last map is by force the zero map. Thus, from the surjectivity of $\varphi$ it would follow that 
\begin{equation}
    H_3(\mathbb{CP}^2\#\overline{\mathbb{CP}}^2)=0
\end{equation}
This would be the last piece in our calculation of the homology groups of $\mathbb{CP}^2\#\overline{\mathbb{CP}}^2$. The only difference with \eqref{eq:hom2cp2} would be that there we used singular homology, so $H_0(\mathbb{CP}^2\#\overline{\mathbb{CP}}^2)$ is nontrivial.

The fact that $\varphi$ is surjective follows from the particular way in which one constructs the long exact sequence of a pair. Given $A\subseteq X$, $H_p(X,A)$ is defined as the $p$th homology group with respect to the quotient space $X/A$, in which $A$ is identified to a point. Thus, cycles in the pair $(X,A)$ either have no boundary, or their boundary lies in $A$. To construct the sequence, the map $\partial_*:H_p(X,A)\to H_{p-1}(A)$ is then defined by precisely taking the boundary in $A$.

Now consider the fundamental class of the pair $(\mathbb{CP}^2\#\overline{\mathbb{CP}}^2, \mathbb{S}^3)$. It is that of $\mathbb{CP}^2\#\overline{\mathbb{CP}}^2$, but with a $\mathbb{S}^3$ identified. Therefore, in the total space, this fundamental class has a boundary, which is precisely the fundamental class of $\mathbb{S}^3$. In other words, one of the generators of the $\mathbb{Z}\oplus \mathbb{Z}$ (depending on which $\mathbb{CP}^2$ we take as a reference) gets mapped to the generator of $H_3(\mathbb{S}^3)$, thus making $\varphi$ surjective.

\newpage
\printbibliography[
heading=bibintoc,
title={References}
]

\end{document}